
\documentclass[journal]{IEEEtran}
\usepackage{cite}
\usepackage{amsmath,amssymb,amsfonts}
\usepackage{algorithmic}
\usepackage{graphicx}
\usepackage{textcomp}
\usepackage{xcolor}

\usepackage{color}

\usepackage{soul}

\soulregister\cite7
\soulregister\ref7
\soulregister\eqref7
\soulregister\label7
\soulregister\prop7

\usepackage{amsthm,amssymb}
\usepackage[english]{babel}
\usepackage{mathalfa}
\usepackage[para]{footmisc}

\usepackage{caption}
\usepackage{epstopdf}
\usepackage{tabularx}
\captionsetup{font={{footnotesize}}}

\usepackage[section]{placeins}
\usepackage{cuted}

\usepackage{setspace}
\setlength{\parskip}{0.2\baselineskip}
\usepackage{stfloats}
\usepackage{multirow}
\usepackage{float}
\usepackage{algorithm}

\usepackage{multirow}                

\newtheorem{theorem}{Theorem}

\newtheorem{remark}{Remark}
\newtheorem*{Proof}{Proof}
\usepackage{xcolor} 
\usepackage{enumerate}

\newtheorem{prop}{Proposition}
\usepackage{color}

\def\BibTeX{{\rm B\kern-.05em{\sc i\kern-.025em b}\kern-.08em
		T\kern-.1667em\lower.7ex\hbox{E}\kern-.125emX}}

%
\ifCLASSINFOpdf
\else
\fi

\begin{document}
%
\title{UAV Swarm-Enabled Aerial CoMP: A Physical Layer Security Perspective}
%
%
%

\author{Xuanxuan Wang,
        Wei Feng,~\IEEEmembership{Senior Member,~IEEE,}
        Yunfei Chen,~\IEEEmembership{Senior Member,~IEEE,}\\
        Ning Ge,~\IEEEmembership{Member,~IEEE,}
\thanks{This work was supported in part by the Beijing Natural Science Foundation
	(Grant No. L172041), in part by the National Natural Science Foundation of
	China (Grant No. 61771286, 61701457, 91638205), and in part by the Beijing
	Innovation Center for Future Chip.}
\thanks{X. Wang and N. Ge are with Beijing National Research Center for Information Science and Technology, Tsinghua University, Beijing, China, and
	Department of Electronic Engineering, Tsinghua University, Beijing, China
	(e-mail: wangxuanxuan@tsinghua.edu.cn; gening@tsinghua.edu.cn).}
\thanks{W. Feng is with Peng Cheng Laboratory, Shenzhen, China, Beijing National
	Research Center for Information Science and Technology, Tsinghua University, Beijing, China, and Department of Electronic Engineering, Tsinghua University, Beijing, China (e-mail: fengwei@tsinghua.edu.cn;). (Corresponding
	author: Wei Feng)}
\thanks{Y. Chen is with School of Engineering, University of Warwick, Coventry
	CV4 7AL, U.K. (e-mail: Yunfei.Chen@warwick.ac.uk).}}

\maketitle

\begin{abstract}
	Unlike aerial base station enabled by a single unmanned aerial vehicle (UAV), aerial coordinated multiple points (CoMP) can be enabled by a UAV swarm. In this case, the management of multiple UAVs is important. This paper considers the power allocation strategy for a UAV swarm-enabled aerial network to enhance the physical layer security of the downlink transmission, where an eavesdropper moves following the trajectory of the swarm for better eavesdropping. Unlike existing works, we use only the large-scale channel state information (CSI) and maximize the secrecy throughput in a whole-trajectory-oriented manner. The overall transmission energy constraint on each UAV and the total transmission duration for all the legitimate users are considered. The non-convexity of the formulated problem is solved by using max-min optimization with iteration. Both the transmission power of desired signals and artificial noise (AN) are derived iteratively. Simulation results are presented to validate the effectiveness of our proposed power allocation algorithm and to show the advantage of aerial CoMP by using only the large-scale CSI.
\end{abstract}

\begin{IEEEkeywords}
Artificial noise, large-scale fading, physical layer security, secrecy throughput, UAV swarm
\end{IEEEkeywords}

%
\IEEEpeerreviewmaketitle

\section{Introduction}
In recent years, unmanned aerial vehicles (UAVs) have attracted great interest in wireless communications \cite{Zeng-2016Magazine,Koulali-2016Magaziney,Zeng-2016TC,Xuan-JCIN2018}. Due to their mobility and elevated position, they can provide agile communications \cite{Wang-2016VTM}. With their high maneuverability, UAV can augment the network capacity and coverage, especially in the extreme environments without infrastructure, such as disaster rescue, traffic monitoring and so on \cite{Zeng-2016Magazine,Yin-2018IoTJournal}. More specifically, UAVs are usually cost-effective \cite{Zeng-TWC2017,Chen-ICL2018,Xiang-arXiv2018,Qi-ChinaCom2018}. They can be exploited to assist on-demand missions, such as high-speed data transmission in the fifth generation (5G) wireless networks. In addition, with the huge demand in emergency applications, i.e., public safety, delivery and surveillance, deploying a flock of UAVs, or swarm, is becoming more attractive, which plays a vital role in meeting performance requirements for communications between multiple UAVs and 5G \cite{Yuan-2018CM,Li-IoT2018,3GPP-22261,3GPP-22862,3GPP-36777}.  

One of the serious concerns in UAV swarm-enabled aerial networks is how to guarantee the privacy and secrecy of the system. Due to the broadcast nature and inherent randomness of wireless channels, UAV swarm-enabled communication networks are particularly vulnerable to various security threats, such as information eavesdropping, information leakage, data modification and so on. In addition, to facilitate the secure transmission, the UAV swarm often places itself near the legitimate users, which is beneficial to eavesdropping, especially when the eavesdropper moves close to the legitimate users. 

\subsection{Related Work}
To achieve perfect security, the conventional encryption schemes are typically implemented at the upper layer using cryptographic methods. However, this is often achieved at the cost of high computational complexity \cite{Strohmeier-2015CST}.

Unlike the traditional cryptographic methods, physical layer security (PLS), using the information-theoretic and signal processing approaches, has been widely investigated in the UAV-enabled wireless networks \cite{Wang-2017WCL,Lee-2018TVT,Cui-TVT2018,Zhou-2018 TVT}. They enhance the coverage and security of the wireless systems by exploiting physical characteristics of the wireless channel. Specifically, by adaptively adjusting the UAVs' location, they could overcome the propagation constraints in the cellular systems, and provide new possibilities or opportunities for security enhancement. The authors in \cite{Wang-2017WCL} utilized UAV as a mobile relay, and maximized the secrecy rate of the system with transmit optimization in a four-node. In \cite{Lee-2018TVT}, the authors investigated UAV-enabled secure communication systems where a mobile UAV sent confidential messages to multiple ground users. By considering the imperfect information on the locations of the eavesdroppers, the authors in \cite{Cui-TVT2018} investigated a UAV-ground communication system with multiple potential eavesdroppers on the ground.  The authors in \cite{Zhou-2018 TVT} considered UAV-assisted secure communications between a legitimate transmitter-receiver pair for unknown eavesdropper location by taking UAV as an air-to-ground friendly jammer.

These studies \cite{Wang-2017WCL,Zhou-2018 TVT,Lee-2018TVT,Cui-TVT2018} have provided insightful results for improving the secrecy performance of the UAV-aided wireless communications. However, they assume an ideal free-space path-loss model \cite{Wang-2017WCL,Lee-2018TVT,Cui-TVT2018} between the UAV and the legitimate receivers/eavesdroppers or the instantaneous channel state information (CSI) \cite{Zhou-2018 TVT} of the eavesdroppers at the transmitter, which may not be practical. 

In practice, it is generally difficult to acquire the instantaneous CSI of the eavesdroppers, especially when they are passive. To deal with that, an effective approach, named as artificial noise(AN), has been proposed to mask the desired signals for enhancing the secrecy performance \cite{Zhou-2010TVT,Zhang-2013TVT,Lin-2013JASC,Li-2013TSP,Zhu-TWC2016}, where AN is designed based on the instantaneous CSI of the legitimate receiver and transmitted in the null-space of the legitimate channel. Although this scheme is helpful for the security, it requires perfect instantaneous CSI between the source and the legitimate receiver at the transmitter, which is nearly unworkable. The idea is then generalized to the UAV-enabled wireless systems, where a UAV is applied as a mobile jammer to transmit AN \cite{Zhou-2019TVT} or a legitimate receiver \cite{Liu-2017WCSP}. However, these works haven't shown useful guidelines to improve physical layer security of UAV swarm-enabled aerial networks. 

\subsection{Main Contributions}
Despite of the above fruitful results, some challenges still remain in the UAV swarm-enabled aerial networks.

For the UAV swarm-enabled aerial networks, an open challenge is how to acquire CSI. To practically depict the typical propagation environments, the composite channel model, consisting of both small-scale and large-scale fading, needs to be used, which is in stark contrast to the existing literatures \cite{Wang-2017WCL,Lee-2018TVT,Cui-TVT2018}. Under the composite channel, one key role for the power allocation strategy is the prior knowledge. Since it is impossible to perfectly acquire the random small-scale fading prior to the whole trajectory of the UAV swarm, it is almost infeasible to assume perfect CSI. In this paper, we devote to guarantee the secrecy performance of the system in a whole-trajectory-oriented manner by utilizing only the large-scale CSI of the legitimate receivers/eavesdroppers, which can be achieved at much lower cost. 

In wireless communication systems, path loss could significantly reduce the signal reception quality at the legitimate users, especially in the UAV swarm-enabled aerial networks. In the existing literatures, one effective scheme to overcome the limitation is by means of multiple antenna systems, i.e., multiple-input single-output (MISO) \cite{Khisti-2010TIT-I,Shi-2014TWC}, multiple-input multiple-output (MIMO) \cite{Khisti-2010TIT-II,Li-2013JASC}, or single-input multiple-output (SIMO) \cite{Parada-ISIT2005}. However, due to the limited size, it is hard for UAVs to be equipped with multiple antennas. To handle that, we consider an effective coordinated multiple points (CoMP) between UAVs in this paper, where multiple single-antenna UAVs are combined to form the UAV swarm and then act as a virtual multiple-antenna node. Unlike the conventional CoMP with fixed base stations (BSs), the UAV swarm is able to cooperatively operate as an aerial CoMP by utilizing the mobility of the UAVs. Note that, in contrast to the existing works achieving CoMP based on perfect CSI \cite{Moslen-2016TWC,Du-2014ICL}, our scheme uses only the predictable large-scale CSI between UAVs and the legitimate receivers/eavesdroppers.

The transmission energy constraint at each UAV is another challenge for the secrecy performance of UAV swarm-enabled aerial networks. Since it's generally difficult to recharge the battery of the UAV during its flight, not only the transmission power budget but also the total transmission energy constraint should be taken into account for each UAV. Note that the total energy consumption of the UAV consists of two components: the UAV’s transmission energy consumption, which is due to the radiation, signal processing as well as other circuitry, and the UAV’s propulsion energy consumption, which is determined by the UAV's flying status including the velocity and acceleration \cite{Zeng-TWC2017}. Due to the signal processing of the confidential messages and the circuitry, this work mainly focuses on the transmission energy constraint per UAV. 

In addition, the transmission duration used to serve the legitimate users also plays a vital role to improve the secrecy performance of the system. Due to the limited total transmission energy at each UAV, the transmission duration may be restricted, which should be well designed for each legitimate user.

Motivated by the above observations, we investigate the AN-aided secure transmission for the UAV swarm-enabled aerial CoMP, where both of the legitimate receivers and eavesdroppers are equipped with multiple antennas. Different from the conventional eavesdropping, we assume the eavesdropper randomly walks following the trajectory of the UAV swarm, which may significantly deteriorate the secrecy performance of the system. In addition, unlike the existing AN-aided secure transmission based on the instantaneous CSI of the legitimate users, AN in our proposed scheme is designed by using only the large-scale CSI. To the best of our knowledge, this is the first time that studies AN-assisted secure transmission in the UAV swarm-enabled aerial CoMP by exploiting only the large-scale CSI of the legitimate receivers/eavesdroppers. 

Our main contributions of the paper are summarized as follows:

\begin{itemize}
	\item We consider physical layer security in the UAV swarm-enabled aerial networks. Specifically, multiple single-antenna rotary-wing UAVs perform an aerial CoMP, and enable a virtual MIMO transmission link with the multiple-antenna legitimate receivers or the eavesdropper, in which the swarm transmits the confidential messages in conjunction with AN and sequentially hovers to serve the scheduled legitimate users. Unlike the existing wiretap mode where the eavesdropper keeps static at a fixed location, we consider the eavesdropper moves following the trajectory of the swarm for better eavesdropping in a passive manner.
	
	\item To characterize the typical propagation environments, we consider a practical composite channel model consisting of both small-scale and large-scale fading. However, it is infeasible to achieve perfect CSI since the small-scale channel fading is time-varying and hard to be acquired. In this work, we use only the large-scale channel fading, which is more reasonable because the large-scale channel fading mainly depends on the position information of both the UAV and the legitimate receivers/eavesdroppers. We can obtain such information based on the historical data and the related distance between the UAV and the legitimate receivers/eavesdroppers.
	
	\item Based on the large-scale CSI, an optimization framework in a whole-trajectory-oriented manner is proposed to maximize the secrecy throughput by jointly optimizing the power allocation between the confidential messages and AN as well as the transmission durations of all the legitimate receivers subject to the overall transmission energy constraint at each rotary-wing UAV. The formulated problem is not convex and hard to be solved directly. To deal with that, an equivalent max-min problem is reformulated by adopting the random matrix theory, and then an efficient iterative algorithm is proposed. Specifically, the problem is split into four subproblems. For the first two and the fourth subproblems, they are convex and can be solved using the general optimization toolbox. For the third subproblem, we first transform its non-convex behavior into the convex one by adopting a successive convex approximation technique. Then, these four subproblems are alternately updated in each iteration. Furthermore, we show that the proposed algorithm guarantees the convergence. Finally, simulation results validate that our proposed scheme could achieve a good secrecy performance. 
\end{itemize}

\subsection{Organization and Notations}
The rest of the paper is organized as follows. Section II presents the system model and problem formulation. Section III proposes power allocation for secure aerial CoMP. In Section IV, simulation results and discussions are presented. Finally, conclusions are made in Section VI.

Throughout this paper, upper case and lower case boldface letters represent the matrices and the vectors, respectively. $\mathbf{I}_L$ is an $L\times L$ identity matrix, and $\mathbf{0}$ is a zero vector. $\mathbb{E}(\cdot)$ denotes the expectation operation. $(\cdot)^H$ and $\text{Tr}(\cdot)$ represent the conjugate transpose and the trace of a matrix, respectively. $\mathbf{A} \succeq \mathbf{0}$ denotes that $\mathbf{A}$ is a positive semidefinite matrix. $y \sim \mathcal{N}(0,a)$ denotes the Gaussian random variable with mean 0 and variance $a$. $\mathbf{x} \sim \mathcal{CN}(\mathbf{s},\mathbf{\Sigma})$ is the complex circularly symmetric Gaussian distribution with the mean vector $\mathbf{s}$ and the covariance matrix $\mathbf{\Sigma}$.

	\section{System Model and Problem Formulation}
	\subsection{System Model}
	We consider the downlink transmission in the UAV swarm-enabled aerial networks. As illustrated in Fig. \ref{System_model}, the system consists of $L$ rotary-wing UAVs (indexed with $1,...,L$), $N$  legitimate users (indexed with $1,...,N$) as Bob, and one eavesdropper as Eve. All the legitimate users and the eavesdropper are equipped with $N_B$ and $N_E$ antennas, respectively. For the rotary-wing UAVs, they form a UAV swarm via CoMP, and act as an aerial base station to assist the wireless networks. When the UAVs serve the legitimate users, they are randomly dispatched in a circle centering at the legitimate users with an altitude. 
	
	Due to the limited weight and size, only one single antenna is equipped at each UAV. Over the flight of the UAV swarm, the legitimate users could be provided with the confidential messages when being scheduled \footnote{In fact, the user scheduling issue is important for the system. However, it is out of the scope of this work.}. Due to the openness of the wireless link, there exists a leakage of the confidential messages. In this system, we assume the eavesdropper is passive and only intends for the confidential messages which are transmitted to the scheduled legitimate users. Furthermore, the eavesdropper randomly moves following the specific trajectory of the swarm to improve eavesdropping. Meanwhile, the eavesdropper also tries to keep a safety distance $r_e$ away from the scheduled legitimate users so that it could not be spotted.
	
	\begin{figure*}[tb]
		\centering       		 			 
		\includegraphics[scale=0.9]{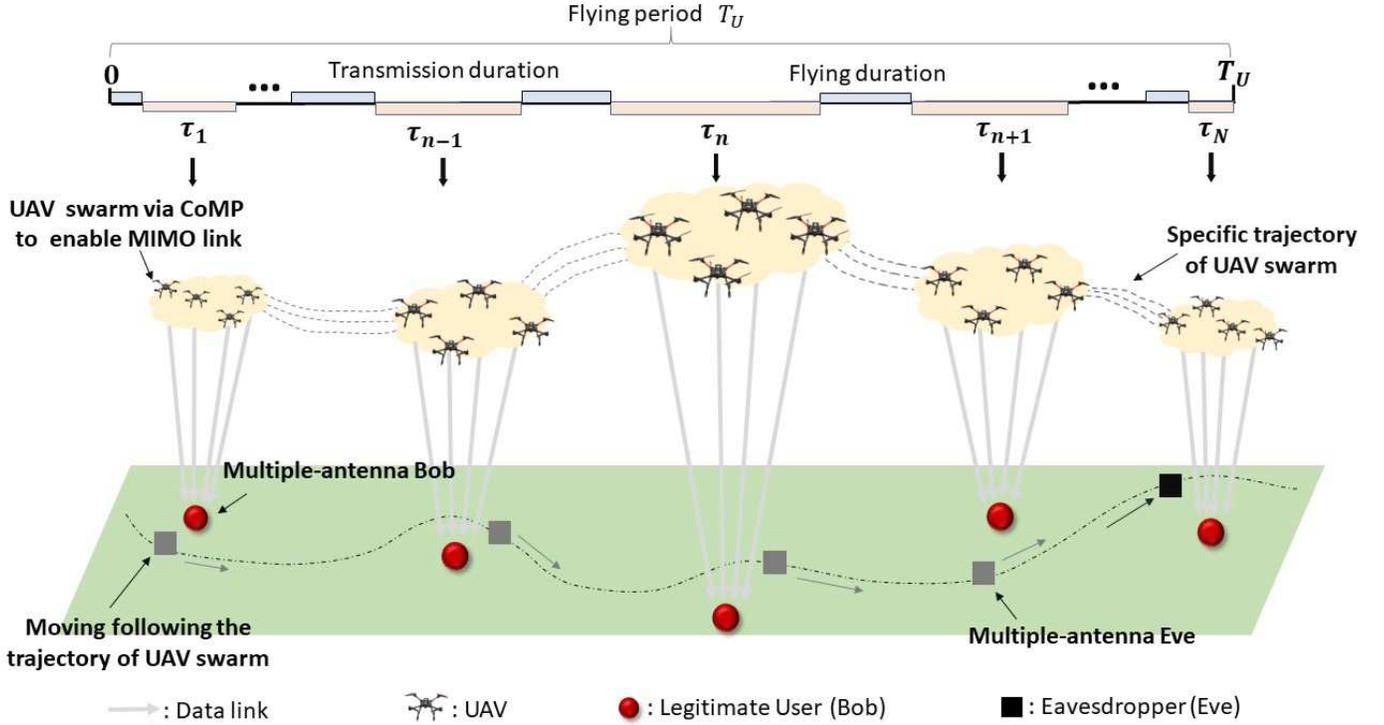}       	 
		\caption{Illustration of a UAV swarm-enabled aerial network, where a UAV swarm, acting as an aerial CoMP, enables MIMO secure communications with the multiple-antenna legitimate users and the eavesdropper in a whole-trajectory-oriented manner. Specifically, the swarm hovers to serve the scheduled legitimate user in the transmission duration, and then it flies to the next scheduled one within a flying duration. For the eavesdropper, it wiretaps the confidential messages by moving following the trajectory of the UAV swarm.} 
		\label{System_model}
	\end{figure*} 
%
	
	In this work, we assume the same consecutive period, denoting as $T_U$, for each rotary-wing UAV in the swarm, which mainly consists of two parts: the total flying duration $T_f$ and the total transmission duration $T_t$. $T_f$ is highly dependent on the velocity of the UAVs, and $T_t$ is determined by the transmission energy consumption of the UAVs. In this work, the UAVs are assumed to serve the scheduled legitimate receivers in a sequential manner. Specifically, they hover above a scheduled legitimate receiver to send the confidential messages during a transmission duration, and then fly to the next scheduled one within a flying duration. In addition, each legitimate receiver is assumed to be scheduled at most once during the total transmission duration. Denote the transmission durations for the scheduled legitimate users as $\tau_1$, $\tau_2$, ...,$\tau_N$, respectively, which change in practice and need to satisfy
	\begin{equation}
	\sum_{n=1}^{N}\tau_n \leq T_t,
	\end{equation}
	\begin{equation}
	0\leq \tau_n \leq \tau^{\max},\forall n,
	\end{equation}
	where $\tau^{\max}$ denotes the transmission duration budget. Note that the transmission duration for each scheduled legitimate user is normally short due to the limited transmission energy of the rotary-wing UAV \cite{Feng-2019IoTJournal}.
	
	Suppose that the coordinate of the $l$th UAV in the $n$th transmission duration is $(w_l[n],s_l[n],h_l[n])$ \footnote{The UAVs fly following a specific trajectory, whose locations can be achieved by using the existing positioning technology, i.e., the technology of Global Positioning System (GPS) and BeiDou Navigation Satellite System (BDS) combined positioning \cite{CSNC2016-Wang}. Based on the open channel link, the UAVs can broadcast their locations to legitimate receivers for them to follow.}, where $(w_l[n],s_l[n])$ and $h_l[n]$ denote the horizontal coordinate and the altitude of the $l$th UAV, respectively. Denote the coordinate of the scheduled legitimate user/eavesdropper in the $n$th transmission duration as $(r_q[n],t_q[n],0)$, where $q \in \{B, E\}$.  Thus, the distance between the $l$th UAV and $q$ at the $n$th transmission duration is 
	\begin{equation}
	\small
	\begin{aligned}
	&d_{q,l}[n]\\
	=& \sqrt{(h_l[n])^2+\big(w_l[n]-r_q[n]\big)^2+\big(s_l[n]-t_q[n]\big)^2}, q\in \{B,E\}.
	\end{aligned}
	\end{equation}
	
	To be practical, we consider both line-of-sight (LoS) and non-line-of-sight (NLoS) connections between the UAVs and the legitimate users. Therefore, the large-scale path loss between the $l$th UAV and $q$ at the $n$th transmission duration can be modeled as \cite{Al-Hourani-2014WCL}
	\begin{equation}
	\text{PL}_{q,l}^{dB}[n]=\frac{A}{1+ae^{-b(\rho_{q,l}[n]-a)}}+B_{q,l}[n],
	\end{equation}
	where
	\begin{equation}
	\allowdisplaybreaks
	\nonumber
	\begin{aligned}
	& A=\eta_{\text{LoS}}-\eta_{\text{NLoS}},\\
	&B_{q,l}[n]=20\text{lg}(d_{q,l}[n])+20\text{lg}(\frac{4\pi f}{c})+\eta_{\text{NLoS}},\\
	& \rho_{q,l}[n]=\frac{180}{\pi}\arcsin\Bigg(\frac{h_l[n]}{d_{q,l}[n]}\Bigg),
	\end{aligned}
	\end{equation} 
	$\eta_{\text{LoS}}$, $\eta_{\text{NLoS}}$, $a$ and $b$ are constants related to the propagation environment, $f$ is the carrier frequency, and $c$ is the speed of light \cite{Al-Hourani-2014WCL}.
	
	Consequently, the absolute power loss between the $l$th UAV and $q$ at the $n$th transmission duration can be expressed as:
	\begin{equation}
	\label{absolute_power_loss}
	Q_{q,l}[n]=10^{\frac{\text{PL}_{q,l}^{dB}[n]}{10}}.
	\end{equation} 
	
	The channel from the $l$th UAV to $q$ at the $n$th transmission duration can be rewritten as
	\begin{equation}
	\mathbf{h}_{q,l}[n]=Q_{q,l}^{-\frac{1}{2}}[n]\mathbf{s}_{q,l}[n],
	\end{equation}
	where $\mathbf{s}_{q,l}[n] \in \mathbb{C}^{N_q\times 1}$ represents the small-scale fading between the $l$th UAV and $q$, of which the entries are independently and identically distributed (i.i.d) according to $\mathcal{CN}(0,1)$.
	
	In order to degrade the eavesdropper's channel, each UAV transmits the confidential message in conjunction with AN. Denoting $x_{l}[n]$ as the transmission signal from the $l$th UAV to the scheduled legitimate user at the $n$th transmission duration, we have  
	\begin{equation}
	x_{l}[n]=x_{l}^{s}[n]+x_{l}^{a}[n],
	\end{equation}
	where $x_l^s[n]$ and $x_l^a[n]$ represent the confidential message and AN from the $l$th UAV, respectively.
	
	Furthermore, we express the transmission power from the $l$th UAV to the scheduled legitimate user at the $n$th transmission duration as
	\begin{equation}
	\mathbb{E}\{\rvert x_{l}[n] \rvert^2\}=p_{l}^{s}[n]+p_{l}^{a}[n],
	\end{equation}
	where $\mathbb{E}\{\rvert x_{l}^{s}[n] \rvert^2\}=p_{l}^s[n]$, $\mathbb{E}\{\rvert x_{l}^{a}[n] \rvert^2\}=p_{l}^a[n]$, $p_{l}^s[n]$ and $p_{l}^a[n]$ denote the power of the confidential message and that of the artificial noise transmitted by the $l$th UAV for the scheduled legitimate user at the $n$th transmission duration, respectively.
	
	Since each UAV has the limited transmission power, we have 
	\begin{equation}
	\label{kd_peak_power}
	0 \leq p_{l}^s[n]+p_{l}^a[n]\leq P^{\max},\ \forall l,n
	\end{equation}
	where \eqref{kd_peak_power} represents the transmission power constraint and $P^{\max}$ is the transmission power budget of each UAV. Considering the short transmission duration, the transmission power is assumed to keep constant at the $n$th transmission duration.
	
	Considering the transmission energy limitation of the UAVs within the flying period, the following constraint is achieved 
	\begin{equation}
	\label{kd_average_peak_power}
	\sum_{n=1}^{N}(p_{l}^s[n]+p_{l}^a[n])\tau_n\leq E^{\max}, \forall l
	\end{equation}
	where \eqref{kd_average_peak_power} denotes the total transmission energy constraint at each UAV over the whole flight, and $E^{\max}$ is the transmission energy budget per UAV.
	
	Based on the aforementioned analysis, all the UAVs work together to transmit the confidential messages for the legitimate users, which could form a virtual $ N_q \times L $ MIMO communication link. Note that to avoid the collision, we assume the UAVs are restricted to fly following their specific trajectory with a minimum safety distance between them. In this case, the composite channel matrix $\mathbf{H}_q[n]\in \mathbb{C}^{{N_q}\times L}$ between the swarm and $q$ at the $n$th transmission duration can be expressed as
	\begin{equation}
	\mathbf{H}_q[n]=\mathbf{S}_q[n]\mathbf{Q}_q[n], q \in  \{B, E\}
	\end{equation} 
	where 
	\begin{equation}
	\nonumber
	\begin{aligned}
	&
	\mathbf{H}_q[n]=\big[\mathbf{h}_{q,1}[n],\mathbf{h}_{q,2}[n],...,\mathbf{h}_{q,L}[n]\big],\\
	\end{aligned}
	\end{equation}
	\begin{equation}
	\nonumber
	\mathbf{S}_q[n]=\big[\mathbf{s}_{q,1}[n],\mathbf{s}_{q,2}[n],...,\mathbf{s}_{q,L}[n]\big],
	\end{equation}
	\begin{equation}
	\nonumber
	\mathbf{Q}_q[n]=
	\begin{bmatrix}
	Q_{q,1}^{-\frac{1}{2}}[n] &  &\\
	&\ddots & \\
	& & Q_{q,L}^{-\frac{1}{2}}[n]
	\end{bmatrix}.
	\end{equation}
	
	The received signal at the scheduled legitimate user, denoting $\mathbf{y}_{B}[n]$, in the $n$th transmission duration, and that at the corresponding eavesdropper, denoting $\mathbf{y}_{E}[n]$, in the $n$th transmission duration are given by
	\begin{equation}
	\label{received_signal_B}
	\mathbf{y}_{B}[n]=\mathbf{H}_{B}[n]\big(\mathbf{x}_{s}[n]+\mathbf{x}_{a}[n]\big)+\mathbf{n}_{B}[n],
	\end{equation}
	\begin{equation}
	\label{received_signal_E}
	\mathbf{y}_{E}[n]=\mathbf{H}_{E}[n]\big(\mathbf{x}_{s}[n]+\mathbf{x}_{a}[n]\big)+\mathbf{n}_{E}[n],
	\end{equation}
	respectively, where $\mathbf{x}_{s}[n] \sim \mathcal{CN}(\mathbf{0},\mathbf{P}_{s}[n])$ and $\mathbf{x}_{a}[n] \sim \mathcal{CN}(\mathbf{0},\mathbf{P}_{a}[n])$ denote the confidential messages and AN transmitted by the UAV swarm at the $n$th transmission duration \footnote{Different from the existing literatures \cite{2008Goel-TWC, 2016TVT-Na}, AN in this work is designed by using only the large-scale CSI instead of the instantaneous legitimate CSI. Owing to the broadcast nature of the wireless channel, AN unavoidably has a leakage and harms the legitimate receivers. Thus, it's important to carefully design the power allocation between the confidential messages and AN so as to minimize the harmful effect on the legitimate users while jamming the eavesdropper, which would be presented in details in the following.}, respectively, and $\mathbf{P}_{s}[n]$ and $\mathbf{P}_{a}[n]$ are their covariance matrices, respectively. $\mathbf{n}_{B}[n] \sim \mathcal{CN}(\mathbf{0},\delta^2\mathbf{I}_{N_B})$ and $\mathbf{n}_{E}[n] \sim \mathcal{CN}(\mathbf{0},\delta^2\mathbf{I}_{N_E})$ denote the noise vectors at the scheduled legitimate user and the eavesdropper at the $n$th transmission duration, respectively, and $\delta^2$ represents the noise variance \footnote{Here, we assume the noise variance is the same, i.e., equal to $\delta^2$, over the flying period. For convenience, we drop $n$ here.}.
	\begin{remark}
		Note that the locations of the legitimate users/eavesdropper are \textit{prior} known by the UAV swarm for transmission resource allocation. Specifically, the positions of the legitimate receivers in different transmission durations could be detected by using GPS or Light Detection and Ranging (LiDAR) at the UAVs \cite{ISPRS2015}.
		
		Considering the eavesdropper's passive behavior when it randomly follows the specific trajectory of the UAV swarm, it may be difficult to achieve the accurate positions of the eavesdropper at the UAVs. In this work, we consider a worst case scenario, where the power allocation strategy is performed on the assumption that the eavesdropper locates at the optimal position with the strongest receiving power in a circle of radius $r_e$. Fig. \ref{top_view} illustrates the optimal location of the eavesdropper during the $n$th transmission duration. Specifically, for the point in the located circle of the eavesdropper with the central angle $\theta[n]$, the receiving power from the $l$th UAV is $\frac{p_l^s[n]+p_l^a[n]}{\bar{Q}_{E,l}[n]}\|\mathbf{s}_{E,l}[n]\|^2$, where $\bar{Q}_{E,l}[n]=Q_{E,l}[n]\big\rvert_{r_E[n]=r_B[n]+r_e\cos(\theta[n]),t_E[n]=t_B[n]+r_e\sin(\theta[n])}$ based on \eqref{absolute_power_loss}. Considering the constant transmission power in the $n$th transmission duration, the optimal central angle can be expressed as $\tilde{\theta}[n]=\arg\min\limits_{\theta[n]\in[0,2\pi]}\frac{\bar{Q}_{E,1}[n]+...+\bar{Q}_{E,L}[n]}{L}$. Then, the eavesdropper's optimal location is obtained as $(r_E[n],t_E[n],0)=(r_B[n]+r_e\cos(\tilde{\theta}[n]),t_B[n]+r_e\sin(\tilde{\theta}[n]),0)$.
	\end{remark}
	
		\begin{figure}[t!]
			\centering       		 			 
			\includegraphics[width= 0.8\columnwidth]{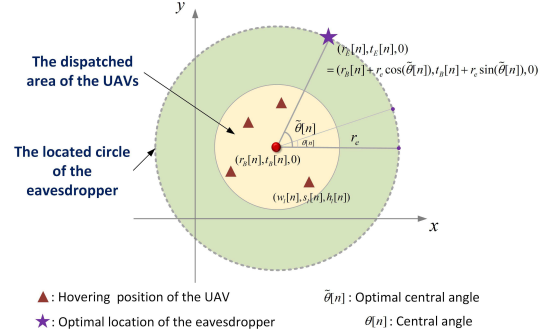}       	 
			\caption{Illustration of the eavesdropper's optimal location in the located circle of the eavesdropper with the safety distance during the $n$th transmission duration (top view of Fig. \ref{System_model}).} 
			\label{top_view}
		\end{figure} 
	\vspace{-0.5cm}
	\subsection{Problem Formulation}
	In this subsection, we focus on the problem formulation for this system. Based on \eqref{received_signal_B}, the achievable ergodic rate for the scheduled legitimate user at the $n$th transmission duration is given by 
	\begin{equation}
	\allowdisplaybreaks
	\label{R_B}
	\begin{aligned}
	R_{B}[n]=&\mathbb{E}_{\mathbf{S}_{B}[n]}\Bigg[\log_2 \det \Bigg(\mathbf{I}_{N_B}+\mathbf{H}_{B}[n]\mathbf{P}_{s}[n](\mathbf{H}_{B}[n])^H \\
	&\times\Big(\mathbf{H}_{B}[n]\mathbf{P}_{a}[n](\mathbf{H}_{B}[n])^H+\delta^2\mathbf{I}_{N_B}\Big)^{-1}\Bigg)\Bigg],
	\end{aligned}   	
	\end{equation}
	where $\mathbb{E}_{\mathbf{S}_{B}[n]}(\cdot)$ is taken over the random small-scale fading realization of $\mathbf{S}_{B}[n]$, 
	\begin{equation}
	\mathbf{P}_{s}[n]=
	\begin{bmatrix}
	p_{1}^s[n]& &\\
	& \ddots &\\
	& & p_{L}^s[n]\\
	\end{bmatrix},
	\end{equation}  
	and
	\begin{equation}
	\mathbf{P}_{a}[n]=
	\begin{bmatrix}
	p_{1}^a[n]& &\\
	& \ddots &\\
	& & p_{L}^a[n]\\
	\end{bmatrix}.
	\end{equation} 
	
	Based on \eqref{received_signal_E}, the achievable ergodic rate for the eavesdropper who is intended for the confidential message of the scheduled legitimate user at the $n$th transmission duration is 
	\begin{equation}
	\allowdisplaybreaks
	\label{R_E}
	\begin{aligned}
	R_{E}[n]=&\mathbb{E}_{\mathbf{S}_{E}[n]}\Bigg[\log_2 \det \Bigg(\mathbf{I}_{N_E}+\mathbf{H}_{E}[n]\mathbf{P}_{s}[n](\mathbf{H}_{E}[n])^H \\
	&\times\Big(\mathbf{H}_{E}[n]\mathbf{P}_{a}[n](\mathbf{H}_{E}[n])^H+\delta^2\mathbf{I}_{N_E}\Big)^{-1}\Bigg)\Bigg],
	\end{aligned}   	
	\end{equation} 
	where $\mathbb{E}_{\mathbf{S}_{E}[n]}(\cdot)$ is taken over the random small-scale fading realization of  $\mathbf{S}_{E}[n]$.
	
	Denote $\mathbf{\Xi}=\{\tau_n, \forall n\}$, $\mathbf{\Phi}_s=\{\mathbf{P}_s[n],\forall n\}$ and $\mathbf{\Phi}_a=\{\mathbf{P}_a[n],\forall n\}$. Then, the secrecy throughput for the UAV swarm-enabled aerial networks is defined as \cite{Chu-TGC2018}
	\begin{equation}
	\label{R}
	R\big(\mathbf{\Xi},\mathbf{\Phi}_{s},\mathbf{\Phi}_{a}\big)=\frac{1}{T_U}\sum_{n=1}^{N}\tau_n\Big[R_{B}[n]-R_{E}[n]\Big]^+,
	\end{equation}
	where $[x]^+=\max(0,x)$.
	
	In this work, our goal is to maximize the secrecy throughput over the flying period of the UAV swarm by jointly optimizing the power of the confidential messages (i.e., $\mathbf{\Phi}_{s}$) and AN power (i.e., $\mathbf{\Phi}_{a}$) as well as the transmission durations of the legitimate users (i.e., $\mathbf{\Xi}$) under the constraint of the total transmission energy for each UAV. The optimization problem can be formulated as
	\begin{subequations}
		\allowdisplaybreaks
		\label{original_problem}
		\begin{align}
		\allowdisplaybreaks
		\max_{\tiny \mathbf{\Xi}, \mathbf{\Phi}_{s},\mathbf{\Phi}_{a}} & \ \ \frac{1}{T_U}\sum_{n=1}^{N}\tau_n\Big[R_{B}[n]-R_{E}[n]\Big]^+ \label{op_objective1} \\
		\text{s.t.} \ \
		& 0 \leq p_{l}^s[n]+p_{l}^a[n]\leq P^{\max}, \forall l,n   & \label{power_constraint1}\\
		& \sum_{n=1}^{N}(p_{l}^s[n]+p_{l}^a[n])\tau_n\leq E^{\max}, \forall l  \label{energy_constraint1}\\
		& \sum_{n=1}^{N}\tau_n \leq T_t,  \label{total_tau_constraint1}\\
		&0\leq \tau_n\leq \tau^{\max}, \forall n \label{tau_n1}\\
		& \mathbf{P}_{s}[n]\succeq \mathbf{0}, \forall n, \label{P_s_constraint1} \\
		& \mathbf{P}_{a}[n]\succeq \mathbf{0}, \forall n.  \label{P_a_constraint1}
		\end{align}
	\end{subequations}    
	
	It can be observed that problem \eqref{original_problem} is challenging to be solved for two reasons. First, the operator $[\cdot]^+$ results in a nonsmooth manner. Second, even without $[\cdot]^+$, the objective function \eqref{op_objective1} has integrals with the expectation operator $\mathbb{E(\cdot)}$, which is intractable and difficult to achieve an explicit expression in terms of $\mathbf{\Phi}_s$, $\mathbf{\Phi}_a$ and $\mathbf{\Xi}$. 
	\begin{remark}
		It is worth mentioning that even though the UAV's propulsion energy is much higher than the UAV's transmission energy \cite{Zeng-TWC2017}, we mainly consider the total transmission energy constraint at each UAV in this work. The reasons are as follows:
		\begin{enumerate}[(1)]
			\item Based on \eqref{received_signal_B} - \eqref{R}, secrecy throughput is highly related to the UAV transmission power consumption. Thus, to maximize the secrecy throughput, the UAV transmission energy consumption is more important than the UAV’s propulsion energy consumption.
			\item We investigate the power allocation strategy to enhance the physical layer security of the downlink transmission in this work, which is part of the process of the signal processing. Based on problem \eqref{original_problem}, the power mainly denotes the transmission powers of the confidential messages and AN transmitted by the UAVs. Since the propulsion energy is mostly used to keep the UAV aloft as well as support its mobility, it nearly has little contribution to the improvement of the secrecy performance of the system.
			\item We perform the power allocation by considering an optimization framework in a whole-trajectory-oriented manner, where the UAVs follow a specific trajectory. In this case, the UAV's flying status, i.e., the flying period, velocity, propulsion energy consumption and so on, can be controlled with commands from the ground station when they serve the legitimate users/eavesdropper, which leads to the limited consideration of propulsion energy.  
		\end{enumerate}
	\end{remark}
	\vspace{-0.5cm}
\section{Power Allocation for Secure Aerial CoMP}
In this section, we devote our effort to achieve the optimal solutions of problem \eqref{original_problem}. Before the further analysis, we first handle the nonsmooth of the objective function in problem \eqref{original_problem} by adopting the similar analysis in \cite{Cui-TVT2018}. Then, \eqref{original_problem} can be reformulated into 
\begin{subequations}
	\label{original_problem1}
	\allowdisplaybreaks
	\begin{align}
	\max_{\tiny \mathbf{\Xi}, \mathbf{\Phi}_{s},\mathbf{\Phi}_{a}} & \ \ \frac{1}{T_U}\sum_{n=1}^{N}\tau_n\Big[R_{B}[n]-R_{E}[n]\Big] \label{op_objective} \\
	\text{s.t.} \ \
	& 0 \leq p_{l}^s[n]+p_{l}^a[n]\leq P^{\max}, \forall l,n   & \label{power_constraint}\\
	& \sum_{n=1}^{N}(p_{l}^s[n]+p_{l}^a[n])\tau_n\leq E^{\max}, \forall l  \label{energy_constraint}\\
	& \sum_{n=1}^{N}\tau_n \leq T_t, \label{total_tau_constraint}\\
	&  0\leq \tau_n\leq \tau^{\max}, \forall n \label{tau_n}\\
	& \mathbf{P}_{s}[n]\succeq \mathbf{0}, \forall n, \label{P_s_constraint} \\
	& \mathbf{P}_{a}[n]\succeq \mathbf{0}, \forall n,  \label{P_a_constraint}
	\end{align}
\end{subequations} 
where problem \eqref{original_problem1} and problem \eqref{original_problem} share the same optimal solution, and $ R\big(\mathbf{\Xi},\mathbf{\Phi}_{s},\mathbf{\Phi}_{a}\big)=\sum_{n=1}^{N}\frac{\tau_n}{T_U}\Big[R_{B}[n]-R_{E}[n]\Big]$.

To achieve the efficient power allocation, an explicit expression of the objective function in \eqref{original_problem1} is necessary. Although some works have provided an insightful result to obtain the analytical expression for the objective function, it is generally too cumbersome to do the further power allocation design since the analytical result involves a series of integrals \cite{Hanlen-2012TIT}.

In the following, we first achieve the closed form of the secrecy throughput in terms of $\mathbf{\Phi}_s$,  $\mathbf{\Phi}_a$ and $\mathbf{\Xi}$ by removing the expectation operator $\mathbb{E}(\cdot)$ based on \cite{Feng-2013}. Then, we reformulate the optimization problem. Finally, a computationally efficient iterative algorithm is proposed for the problem and its convergence is presented.

\subsection{Problem Transformation}	

Let $\mathbf{P}_u[n]=\mathbf{P}_s[n]+\mathbf{P}_a[n], \forall n$, where $\mathbf{P}_u[n]=\text{diag}\big[p_1^u[1],... ,p_L^u[N]\big]$, and $\mathbf{\Phi}_u=\{\mathbf{P}_u[n],\forall n\}$. Then, the secrecy throughput $ R\big(\mathbf{\Xi}, \mathbf{\Phi}_{s},\mathbf{\Phi}_{a}\big)$ can be equivalently rewritten as
\begin{equation}
\allowdisplaybreaks
\small
\label{eqn}
\begin{aligned}
&R\big(\mathbf{\Xi}, \mathbf{\Phi}_{u},\mathbf{\Phi}_{a}\big) \\
=&\frac{1}{T_U}\sum_{n=1}^{N}\tau_n
\Bigg[ 
\mathbb{E}_{\mathbf{S}_{B}[n]}\Big[\log_2 \det \Big(\mathbf{I}_{N_B}+\frac{\mathbf{H}_{B}[n]\mathbf{P}_{u}[n](\mathbf{H}_{B}[n])^H}{\delta^2}\Big)\Big]\\
&\ \ \ \ \ \ \ \ \ \ \ \   - \mathbb{E}_{\mathbf{S}_{B}[n]}\Big[\log_2 \det \Big(\mathbf{I}_{N_B}+\frac{\mathbf{H}_{B}[n]\mathbf{P}_{a}[n](\mathbf{H}_{B}[n])^H}{\delta^2} 
\Big)\Big]\\
& \ \ \ \ \ \ \ \ \ \ \ \  -\mathbb{E}_{\mathbf{S}_{E}[n]}\Big[\log_2 \det \Big(\mathbf{I}_{N_E}+\frac{\mathbf{H}_{E}[n]\mathbf{P}_{u}[n](\mathbf{H}_{E}[n])^H}{\delta^2}\Big)\Big]\\
& \ \ \ \ \ \ \ \ \ \ \ \  +\mathbb{E}_{\mathbf{S}_{E}[n]}\Big[\log_2 \det \Big(\mathbf{I}_{N_E}+\frac{\mathbf{H}_{E}[n]\mathbf{P}_{a}[n](\mathbf{H}_{E}[n])^H}{\delta^2} 
\Big)\Big]
\Bigg].
\end{aligned}
\end{equation}
It can be observed that \eqref{eqn} is still intractable due to the expectation operator $\mathbb{E}(\cdot)$. To cope with that, we try to approximate \eqref{eqn} by introducing the following theorem.
\begin{figure} [t!]  
	\centering 		 		 
	\includegraphics[width=0.9\columnwidth]{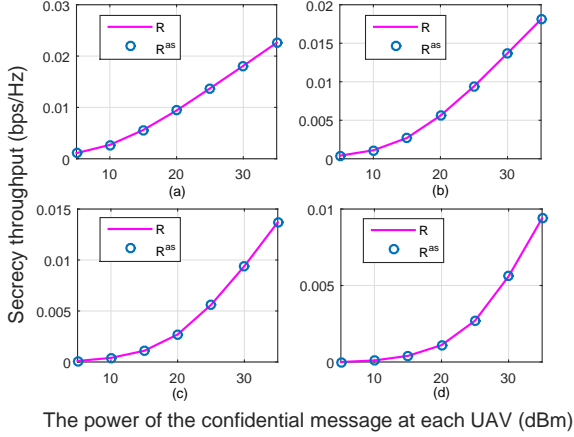}       	 
	\caption{Illustration of the approximate accuracy of $R^{as}\big(\mathbf{\Xi}, \mathbf{\Phi}_{u},\mathbf{\Phi}_{a}\big)$ with the power of the confidential message at each UAV, where we set $N_B=5$, $N_E=3$ and $L=7$. We specify $\mathbf{P}_a[n]=p_a\mathbf{I}_L, \forall n$ as the input AN power. Particularly, $p_a$ from (a) - (d) in the figure are set $5$dBm, $10$dBm, $15$dBm, $20$dBm, respectively.} 
	\label{R-Ra}
\end{figure}
\begin{theorem}
	\label{theorem_for_Rap}
	By introducing auxiliary variables $\mathbf{t}_{B,u}=\{{t_{B,u}[n], \forall n}\}$, $\mathbf{t}_{E,a}=\{{t_{E,a}[n], \forall n}\}$, $\mathbf{t}_{B,a}=\{{t_{B,a}[n], \forall n}\}$, $\mathbf{t}_{E,u}=\{{t_{E,u}[n], \forall n}\}$, $R\big(\mathbf{\Xi}, \mathbf{\Phi}_{u},\mathbf{\Phi}_{a}\big) $ in \eqref{eqn} can be equivalently expressed as
	\begin{equation}
	\small
	\label{R_ap}
	\begin{aligned}
	&R^{as}\big(\mathbf{\Xi}, \mathbf{\Phi}_{u},\mathbf{\Phi}_{a}\big)\\ 
	=&
	\begin{aligned}
	\max_{\tiny
		\mathbf{t}_{B,a}\geq \mathbf{0},
		\mathbf{t}_{E,u}\geq \mathbf{0},}
	\end{aligned} 
	\begin{aligned} \
	\min_{\tiny
		\mathbf{t}_{B,u}\geq \mathbf{0},
		\mathbf{t}_{E,a}\geq \mathbf{0},}
	\end{aligned} 
	\ \  \mathcal{G}(\mathbf{\Xi}, \mathbf{\Phi}_{u},\mathbf{\Phi}_{a},\mathbf{t}_{B,u},\mathbf{t}_{B,a},\\
	& \ \  \ \ \ \ \ \ \ \ \ \ \ \ \ \ \ \ \ \ \ \ \ \ \ \ \ \ \ \ \ \ \ \ \ \ \ \ \ \ \ \ \ \ \ \  \ \ \ \ \ \ \ \ \ \ \mathbf{t}_{E,u},\mathbf{t}
	_{E,a}),
	\end{aligned}
	\end{equation}
	where $\mathcal{G}(\mathbf{\Xi}, \mathbf{\Phi}_{u},\mathbf{\Phi}_{a},\mathbf{t}_{B,u},\mathbf{t}_{B,a},\mathbf{t}_{E,u},\mathbf{t}
	_{E,a})$ is defined in \eqref{eqx_approximate} on the top of the next page. Furthermore, $\mathcal{G}(\mathbf{\Xi}, \mathbf{\Phi}_{u},\mathbf{\Phi}_{a},\mathbf{t}_{B,u},\mathbf{t}_{B,a},\mathbf{t}_{E,u},\mathbf{t}
	_{E,a})$ is convex in terms of ($\mathbf{t}_{B,u}$, $\mathbf{t}_{E,a}$), and concave with respect to ($\mathbf{t}_{B,a}$, $\mathbf{t}_{E,u}$). 
\end{theorem}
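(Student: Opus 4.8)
The plan is to replace each of the four ergodic mutual-information terms in \eqref{eqn} by its large-system deterministic equivalent from random matrix theory \cite{Feng-2013}, and then to recast each deterministic equivalent as the extremum of a scalar auxiliary problem, so that the four extremizations combine into the single max--min of \eqref{R_ap}. The enabling observation is that, because both $\mathbf{Q}_q[n]$ and the diagonal power matrices commute, each Gram matrix factors as $\mathbf{H}_q[n]\mathbf{P}[n](\mathbf{H}_q[n])^H=\mathbf{S}_q[n]\mathbf{D}[n](\mathbf{S}_q[n])^H$, where $\mathbf{D}[n]=\mathbf{Q}_q[n]\mathbf{P}[n]\mathbf{Q}_q[n]$ is diagonal with entries $p_l/Q_{q,l}[n]$ and $\mathbf{S}_q[n]$ has i.i.d.\ $\mathcal{CN}(0,1)$ entries. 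This is exactly the separable power-profile model for which the Shannon transform admits a deterministic equivalent, so I would approximate each term $\mathbb{E}_{\mathbf{S}_q[n]}[\log_2\det(\mathbf{I}+\tfrac{1}{\delta^2}\mathbf{S}_q[n]\mathbf{D}[n](\mathbf{S}_q[n])^H)]$ by a deterministic quantity $V_{q,\cdot}[n]$ governed by a fixed-point equation in one auxiliary scalar.

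The key step is the variational representation of each $V_{q,\cdot}[n]$. I would introduce the auxiliary scalar $t$ as the fixed-point variable and construct a function $f(t;\cdot)$ (a sum of logarithmic terms in $t$ with a linear penalty) whose stationarity condition $\partial f/\partial t=0$ reproduces precisely the random-matrix fixed-point equation. Differentiating once shows the stationary point recovers the fixed point, and computing $\partial^2 f/\partial t^2$ shows $f(t;\cdot)$ is strictly convex on $t\geq 0$, so that $V_{q,\cdot}[n]=\min_{t\geq 0}f(t;\cdot)$ with a unique minimiser. The four scalars $t_{B,u}[n]$, $t_{B,a}[n]$, $t_{E,u}[n]$, $t_{E,a}[n]$ in the theorem are exactly these auxiliary variables for the four terms of \eqref{eqn}.

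With the variational forms secured, the assembly is sign bookkeeping. The two terms entering \eqref{eqn} with a plus sign (the $\mathbf{P}_u$ term at Bob and the $\mathbf{P}_a$ term at Eve) retain their minimisations, contributing $\min_{t_{B,u}\geq 0}$ and $\min_{t_{E,a}\geq 0}$; the two terms with a minus sign (the $\mathbf{P}_a$ term at Bob and the $\mathbf{P}_u$ term at Eve) become $-\min_t f=\max_t(-f)$, contributing $\max_{t_{B,a}\geq 0}$ and $\max_{t_{E,u}\geq 0}$ over the concave functions $-f$. Since each auxiliary scalar appears in only one term, the weights $\tau_n/T_U\geq 0$ are nonnegative, and the minimisation and maximisation act on disjoint variable blocks, the sum over $n$ separates and the inner extremizations pull outside to yield the joint max--min over $(\mathbf{t}_{B,a},\mathbf{t}_{E,u})$ and $(\mathbf{t}_{B,u},\mathbf{t}_{E,a})$ in \eqref{R_ap}, with aggregated objective $\mathcal{G}$ as in \eqref{eqx_approximate}. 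The convexity/concavity claim then follows at once: $\mathcal{G}$ is a nonnegatively-weighted sum over $n$ of the per-term functions, each min-block variable enters through a convex $f$ and each max-block variable through the concave $-f$, so $\mathcal{G}$ is convex in $(\mathbf{t}_{B,u},\mathbf{t}_{E,a})$ and concave in $(\mathbf{t}_{B,a},\mathbf{t}_{E,u})$.

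The hard part is this variational step: constructing the potential $f(t;\cdot)$ whose unique minimiser reproduces the deterministic-equivalent fixed-point equation, and verifying strict convexity through the sign of the second derivative. Everything downstream---the sign accounting and the convex/concave conclusion---is routine once this scalar characterisation is in place. The approximation gap between $R$ in \eqref{eqn} and $R^{as}$ in \eqref{R_ap} is controlled by the large-system result of \cite{Feng-2013} and is quantified empirically in Fig.~\ref{R-Ra}.
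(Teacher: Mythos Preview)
Your proposal is correct and follows essentially the same route as the paper's proof: apply the deterministic equivalent of \cite{Feng-2013} to each of the four log-det terms, reparameterise the fixed-point variable (the paper does this explicitly via $w=e^{t}$), verify by first- and second-derivative computation that the resulting scalar function is strictly convex in $t\geq 0$ with its unique minimiser recovering the fixed point, and then invoke the sign flip $-\min f=\max(-f)$ on the two negatively-signed terms to assemble the max--min form and read off the convexity/concavity of $\mathcal{G}$. Your write-up is in fact somewhat more explicit than the paper's about the sign bookkeeping and the separability across $n$; the paper handles only the $(B,u)$ term in detail and closes with ``generalize to the other three terms.''
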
 
\begin{Proof}
	Please refer to Appendix \ref{appendix_theorem_Rap}.
\end{Proof}
\begin{figure*}[ht]
	
	\normalsize
	\begin{equation}
	\label{eqx_approximate}
	\begin{aligned}
	&\mathcal{G}(\mathbf{\Xi}, \mathbf{\Phi}_{u},\mathbf{\Phi}_{a},\mathbf{t}_{B,u},\mathbf{t}_{B,a},\mathbf{t}_{E,u},\mathbf{t}
	_{E,a})\\
	=&\frac{1}{T_U}\sum_{n=1}^{N}\tau_n\Bigg[g\Big(\mathbf{P}_u[n],N_B,\mathbf{Q}_B[n],t_{B,u}[n]\Big)
	-g\Big(\mathbf{P}_a[n],N_B,\mathbf{Q}_B[n],t_{B,a}[n]\Big)\\
	&\ \ \ \ \ \ \ \ \ \  \ \ \ \ \ - g\Big(\mathbf{P}_u[n],N_E,\mathbf{Q}_E[n],t_{E,u}[n]\Big)
	+g\Big(\mathbf{P}_a[n],N_E,\mathbf{Q}_E[n],t_{E,a}[n]\Big)\Bigg]\\
	=&\frac{1}{T_U}\sum_{n=1}^{N}\tau_n\Bigg\{\Big[\log_2 \det\Big(\mathbf{I}_L+\frac{N_B\mathbf{Q}_B[n]\mathbf{P}_u[n]\mathbf{Q}_B[n]}{\delta^2e^{t_{B,u}[n]}}\Big) +  N_B\log_2e\Big(t_{B,u}[n]-1+e^{-t_{B,u}[n]}\Big)\Big]\\
	&\ \ \ \ \ \ \ \ \ \  \ \ \ \ \ \ -\Big[\log_2 \det\Big(\mathbf{I}_L+\frac{N_B\mathbf{Q}_B[n]\mathbf{P}_a[n]\mathbf{Q}_B[n]}{\delta^2e^{t_{B,a}[n]}}\Big)+  N_B\log_2e\Big(t_{B,a}[n]-1+e^{-t_{B,a}[n]}\Big)\Big]\\
	&\ \ \ \ \ \ \ \ \ \  \ \ \ \ \ \ -\Big[\log_2 \det\Big(\mathbf{I}_L+\frac{N_E\mathbf{Q}_E[n]\mathbf{P}_u[n]\mathbf{Q}_E[n]}{\delta^2e^{t_{E,u}[n]}}\Big) +  N_E\log_2e\Big(t_{E,u}[n]-1+e^{-t_{E,u}[n]}\Big)\Big]\\
	&\ \ \ \ \ \ \ \ \ \  \ \ \ \ \ \  +\Big[\log_2 \det\Big(\mathbf{I}_L+\frac{N_E\mathbf{Q}_E[n]\mathbf{P}_a[n]\mathbf{Q}_E[n]}{\delta^2e^{t_{E,a}[n]}}\Big)
	+  N_E\log_2e\Big(t_{E,a}[n]-1+e^{-t_{E,a}[n]}\Big)\Big] \Bigg\}. 			
	\end{aligned}
	\end{equation}
	\hrulefill 
	\vspace*{4pt}
\end{figure*}


Note that the performance gap between $R\big(\mathbf{\mathbf{\Xi}, \Phi}_{u},\mathbf{\Phi}_{a}\big)$ in \eqref{eqn} and $R^{as}\big(\mathbf{\Xi}, \mathbf{\Phi}_{u},\mathbf{\Phi}_{a}\big)$ in \eqref{R_ap} is negligible, which has been demonstrated in Fig. \ref{R-Ra}. Without loss of generality, we randomly select a system topology, and adopt the values of simulation settings in Section IV with different transmission powers based on urban scenario. Specifically, we first verify that $\mathcal{K}(\mathbf{\Xi},\mathbf{\Phi}_u,\mathbf{\Phi}_a,\mathbf{w}_{B,u},\mathbf{w}_{B,a},\mathbf{w}_{E,u},\mathbf{w}_{E,a})=y(\mathbf{\Xi},\mathbf{\Phi}_u,\mathbf{w}_{B,u})-y(\mathbf{\Xi},\mathbf{\Phi}_a,\mathbf{w}_{B,a})-y(\mathbf{\Xi},\mathbf{\Phi}_u,\mathbf{w}_{E,u})+y(\mathbf{\Xi},\mathbf{\Phi}_a,\mathbf{w}_{E,a})$, which is based on the definition \eqref{approx_eq} in Appendix \ref{appendix_theorem_Rap}, is the closed form of $R\big(\mathbf{\Xi},\mathbf{\Phi}_{u},\mathbf{\Phi}_{a}\big)$ by simulation in Fig. \ref{R-Ra}. Then, considering the relationship between $\mathcal{K}(\mathbf{\Xi},\mathbf{\Phi}_u,\mathbf{\Phi}_a,\mathbf{w}_{B,u},\mathbf{w}_{B,a},\mathbf{w}_{E,u},$ $\mathbf{w}_{E,a})$ and $\mathcal{G}(\mathbf{\Xi}, \mathbf{\Phi}_{u},\mathbf{\Phi}_{a},$ $\mathbf{t}_{B,u},\mathbf{t}_{B,a},\mathbf{t}_{E,u},\mathbf{t}_{E,a})$ according to the derivation \eqref{max_min_equ} in Appendix \ref{appendix_theorem_Rap}, it is obvious to see that $R^{as}\big(\mathbf{\Xi}, \mathbf{\Phi}_{u},\mathbf{\Phi}_{a}\big)$ is a very accurate approximation for $R\big(\mathbf{\Xi},\mathbf{\Phi}_{u},\mathbf{\Phi}_{a}\big)$. In this case, the optimization problem \eqref{original_problem1} can be reformulated as
\begin{subequations}
	\allowdisplaybreaks
	\label{eqn_relax3}
	\begin{align}
	\max_{\tiny
		\begin{aligned}
		&\mathbf{\Xi},	\mathbf{\Phi}_{u},\mathbf{\Phi}_{a},\\
		&\mathbf{t}
		_{B,a},\mathbf{t}_{E,u}\\
		\end{aligned} }
	\min_{\tiny
		\begin{aligned}
		\mathbf{t}_{B,u},
		\mathbf{t}_{E,a}
		\end{aligned} }
	\ & 	\mathcal{G}(\mathbf{\Xi},\mathbf{\Phi}_{u},\mathbf{\Phi}_{a},\mathbf{t}_{B,u},\mathbf{t}_{B,a},\mathbf{t}_{E,u},\mathbf{t}
	_{E,a})\\
	\text{s.t.}\ \ \ \ \ \
	\   &0 \leq p_l^u[n] \leq P^{\max}, \forall l,n \label{Pu_constraint}\\
	& \sum_{n=1}^{N}p_l^u[n]\tau_n \leq E^{\max}, \forall l\\
	& \sum_{n=1}^{N}\tau_n \leq T_t,\\
	&  0\leq \tau_n\leq \tau^{\max}, \forall n \\
	& \mathbf{P}_{u}[n]-\mathbf{P}_a[n]\succeq \mathbf{0}, \forall n, \label{Pu_Pa_constraint}  \\
	& \mathbf{P}_{a}[n]\succeq \mathbf{0}, \forall n, \label{Pa_constraint}  \\
	& \mathbf{t}_{B,a}\geq \mathbf{0}, \mathbf{t}_{E,u}\geq \mathbf{0},\\
	& \mathbf{t}_{B,u}\geq \mathbf{0},\mathbf{t}_{E,a}\geq \mathbf{0}.
	\end{align}
\end{subequations}	

It's worth mentioning that by performing a serial of reformulations, problem \eqref{eqn_relax3} is equivalent to the original problem \eqref{original_problem}. In the following, we devote to an efficient optimization algorithm for the equivalent problem \eqref{eqn_relax3}.

\subsection{An Iterative Algorithm to Solve the Problem}

Note that since \eqref{eqn_relax3} is not convex, it's generally difficult to achieve the global optimal solutions. In this subsection, we propose an efficient iterative algorithm by adopting the block coordinate descent method to find the stationary solutions of \eqref{eqn_relax3}. We first decouple the optimization variables into the following four blocks $(\mathbf{t}_{B,u},\mathbf{t}_{E,a})$, $(\mathbf{t}_{B,a},\mathbf{t}_{E,u})$, $(\mathbf{\Phi}_u,\mathbf{\Phi}_a)$, $\mathbf{\Xi}$ and then alternately optimize these four blocks one by one by taking the other variables as the constants obtained in the last iteration. Specifically, for any given transmission power indicators ($\mathbf{\Phi}_u$, $\mathbf{\Phi}_a$) and the transmission duration $\mathbf{\Xi}$, the auxiliary variables ($\mathbf{t}_{B,u}$, $\mathbf{t}_{E,a}$) (or ($\mathbf{t}_{B,a}$, $\mathbf{t}_{E,u}$)) can be efficiently solved through standard algorithm \cite{Brayton-1979}. For any obtained auxiliary variables ($\mathbf{t}_{B,a}$, $\mathbf{t}_{E,u}$) and  ($\mathbf{t}_{B,u}$, $\mathbf{t}_{E,a}$) as well as the transmission duration $\mathbf{\Xi}$, the transmission power indicators ($\mathbf{\Phi}_u$, $\mathbf{\Phi}_a$) can be optimized by the successive convex approximation technique. Finally, the transmission duration $\mathbf{\Xi}$ can be achieved based on the obtained $(\mathbf{t}_{B,u},\mathbf{t}_{E,a})$, $(\mathbf{t}_{B,a},\mathbf{t}_{E,u})$ and $(\mathbf{\Phi}_u,\mathbf{\Phi}_a)$.

Denote $m\geq 1$ as the number of the iteration step. Problem \eqref{eqn_relax3} can be separated into the following four subproblems.

(1) The auxiliary variables $(\mathbf{t}_{B,u},\mathbf{t}_{E,a})$

In the $m$th iteration, we first optimize $(\mathbf{t}_{B,u},\mathbf{t}_{E,a})$ with $(\mathbf{\Phi}^{m-1}_u,\mathbf{\Phi}^{m-1}_a)$ and $\mathbf{\Xi}^{m-1}$ obtained in the $m-1$th iteration. In this case, problem \eqref{eqn_relax3} can be reformulated into 
\begin{subequations}
	\allowdisplaybreaks
	\label{subproblem1}
	\begin{align}
	\min_{\mathbf{t}_{B,u},\mathbf{t}_{E,a}} & \frac{1}{T_U}\sum_{n=1}^{N}\tau_n^{m-1}\Big[g\Big(\mathbf{P}^{m-1}_u[n],N_B,\mathbf{Q}_B[n],t_{B,u}[n]\Big)
	\\
	&\ \ \ \ \ \ \ \ \  \ \ \ \ \ \ \ \ \ \ \ \
	+g\Big(\mathbf{P}^{m-1}_a[n],N_E,\mathbf{Q}_E[n],t_{E,a}[n]\Big)\Big] \nonumber\\
	\text{s.t.} &\ \  \ \mathbf{t} _{B,u}\geq \mathbf{0}, \mathbf{t}_{E,a}\geq \mathbf{0}.
	\end{align}
\end{subequations}

Based on Theorem 1, we know that problem \eqref{subproblem1} is convex, which can be efficiently solved by means of the standard optimization toolbox, i.e., CVX.

(2) The auxiliary variables $(\mathbf{t}_{B,a},\mathbf{t}_{E,u})$

Based on the obtained $(\mathbf{\Phi}^{m-1}_u,\mathbf{\Phi}^{m-1}_a)$ and $\mathbf{\Xi}^{m-1}$ in the $m-1$th iteration, the variables $(\mathbf{t}_{B,a},\mathbf{t}_{E,u})$ can be achieved by 
\begin{subequations}
	\allowdisplaybreaks
	\label{subproblem2}
	\begin{align}
	\min_{\mathbf{t}_{B,a},\mathbf{t}_{E,u}} & \frac{1}{T_U}\sum_{n=1}^{N}\tau_n^{m-1}\Big[g\Big(\mathbf{P}^{m-1}_a[n],N_B,\mathbf{Q}_B[n],t_{B,a}[n]\Big)
	\\
	&\ \ \ \ \ \ \ \ \ \ \ \ \ \ \ \ \ \ \ \
	+g\Big(\mathbf{P}^{m-1}_u[n],N_E,\mathbf{Q}_E[n],t_{E,u}[n]\Big)\Big] \nonumber\\
	\text{s.t.} &\ \ \ \mathbf{t} _{B,a}\geq \mathbf{0}, \mathbf{t}_{E,u}\geq \mathbf{0}.
	\end{align}
\end{subequations}
which is convex and can be directly solved using CVX.

(3) The transmission power indicators $(\mathbf{\Phi}_u,\mathbf{\Phi}_a)$

The variables $(\mathbf{\Phi}_u,\mathbf{\Phi}_a)$ with the obtained $(\mathbf{t}^m_{B,u},\mathbf{t}^m_{E,a})$, $(\mathbf{t}^m_{B,a},\mathbf{t}^m_{E,u})$ and $\mathbf{\Xi}^{m-1}$ in the $m$th iteration can be achieved by solving the following problem 
\begin{subequations}
	\allowdisplaybreaks
	\label{subproblem3}
	\begin{align}
	\max_{\mathbf{\Phi}_u,\mathbf{\Phi}_a}  & \frac{1}{T_U}\sum_{n=1}^{N}\tau_n^{m-1}\Big[g\Big(\mathbf{P}_u[n],N_B,\mathbf{Q}_B[n],t^m_{B,u}[n]\Big)\\
	&\ \ \ \ \ \ \ \ \ \ \ \ \ \ \ \ \ \ \ \
	-g\Big(\mathbf{P}_a[n],N_B,\mathbf{Q}_B[n],t^m_{B,a}[n]\Big)\nonumber\\
	&\ \ \ \ \ \ \ \ \ \ \ \ \ \ \ \ \ \ \ \ -g\Big(\mathbf{P}_u[n],N_E,\mathbf{Q}_E[n],t^m_{E,u}[n]\Big) \nonumber\\
	&\ \ \ \ \ \ \ \ \ \ \ \ \ \ \ \ \ \ \ \
	+g\Big(\mathbf{P}_a[n],N_E,\mathbf{Q}_E[n],t^{m}_{E,a}[n]\Big)\Big]\nonumber\\	
	\text{s.t.} & \sum_{n=1}^{N}p_{l}^u[n]\tau_n^{m-1}\leq E^{\max}, \forall l \\
	&\eqref{Pu_constraint},\eqref{Pu_Pa_constraint}, \eqref{Pa_constraint}.\ \ 
	\end{align}
\end{subequations}

Note that both $g\big(\mathbf{P}_u[n],N_B,\mathbf{Q}_B[n],t_{B,u}^m[n]\big)$  and $g\big(\mathbf{P}_u[n],N_E,\mathbf{Q}_E[n],t_{E,u}^m[n]\big)$ are concave with respect to $\mathbf{P}_u[n]$, and both $g\big(\mathbf{P}_a[n],N_B,\mathbf{Q}_B[n],t_{B,a}^m[n]\big)$ and $g\big(\mathbf{P}_a[n],N_E,\mathbf{Q}_E[n],t_{E,a}^m[n]\big)$ are concave with respect to $\mathbf{P}_a[n]$ \cite{Feng-2013}. Thus, the objective function
in \eqref{subproblem3} actually mixes the addition and subtraction of these four concave terms, which is neither concave with respect to $\mathbf{\Phi}_{u}$ nor concave with respect to $\mathbf{\Phi}_{a}$. That is, problem \eqref{subproblem3} is not convex in terms of $\mathbf{\Phi}_{u}$ and $\mathbf{\Phi}_{a}$.

To overcome the convexity issue, we approximate the second and the third terms of the objective function in \eqref{subproblem3} to an affine function based on the first-order Taylor expansion.   

The gradient of $g\big(\mathbf{P}_a[n],N_B,\mathbf{Q}_B[n],t_{B,a}[n]\big)$\footnote{For convenience, we drop $m$.} can be rewritten as
\begin{equation}
\allowdisplaybreaks
\begin{aligned}
&\nabla_{\mathbf{P}_a[n]}\ g\big(\mathbf{P}_a[n],N_B,\mathbf{Q}_B[n],t_{B,a}[n]\big)\\
=&\varphi_{B,a}[n]\mathbf{Q}_B[n]\big(\mathbf{I}_L+\varphi_{B,a}[n]\mathbf{Q}_B[n]\mathbf{P}_a[n]\mathbf{Q}_B[n]\big)^{-1}\mathbf{Q}_B[n] \\
\triangleq& \mathcal{F}(\mathbf{P}_a[n],N_B,\mathbf{Q}_B[n],t_{B,a}[n]),
\end{aligned}
\end{equation}
where $\varphi_{B,a}[n]=\frac{\log_2eN_B}{\delta^2e^{t_{B,a}[n]}}$.
Thus, the first-order Taylor expansion of $g\big(\mathbf{P}_a[n],N_B,\mathbf{Q}_B[n],t_{B,a}[n]\big)$ at a certain point $\tilde{\mathbf{P}}_a[n]$ can be expressed as
\begin{equation}
\allowdisplaybreaks
\label{ga_Taylor}
\begin{aligned}
\allowdisplaybreaks
&g\big(\mathbf{P}_a[n],N_B,\mathbf{Q}_B[n],t_{B,a}[n]\big|\tilde{\mathbf{P}}_a[n]\big)\\
=&g\big(\tilde{\mathbf{P}}_a[n],N_B,\mathbf{Q}_B[n],t_{B,a}[n]\big)\\
&+
\text{tr}\big[\mathcal{F}\big(\tilde{\mathbf{P}}_a[n],N_B,\mathbf{Q}_B[n],t_{B,a}[n]\big)\big(\mathbf{P}_a[n]-\tilde{\mathbf{P}}_a[n]\big)\big],
\end{aligned}
\end{equation} 
where $g\big(\tilde{\mathbf{P}}_a[n],N_B,\mathbf{Q}_B[n],t_{B,a}[n]\big)=g\big(\mathbf{P}_a[n],N_B,$ $\mathbf{Q}_B[n],t_{B,a}[n]\big)|_{\mathbf{P}_a[n]=\tilde{\mathbf{P}}_a[n]}$. 
Clearly, $g\big(\mathbf{P}_a[n],N_B,\mathbf{Q}_B[n],$ $t_{B,a}[n]|\tilde{\mathbf{P}}_a[n]\big)$ is a linear function with respect to $\mathbf{P}_a[n]$. 

Similarly, the gradient of $g\big(\mathbf{P}_u[n],N_E,\mathbf{Q}_E[n],t_{E,u}[n]\big)$ can be rewritten as
\begin{equation}
\begin{aligned}
&\nabla_{\mathbf{P}_u[n]}\ g\big(\mathbf{P}_u[n],N_E,\mathbf{Q}_E[n],t_{E,u}[n]\big)\\
=&\varphi_{E,u}[n]\mathbf{Q}_E[n]\big(\mathbf{I}_L+\varphi_{E,u}[n]\mathbf{Q}_E[n]\mathbf{P}_u[n]\mathbf{Q}_E[n]\big)^{-1}\mathbf{Q}_E[n] \\
\triangleq&\mathcal{F}(\mathbf{P}_u[n],N_E,\mathbf{Q}_E[n],t_{E,u}[n]),
\end{aligned}
\end{equation}
where $\varphi_{E,u}[n]=\frac{\log_2eN_E}{\delta^2e^{t_{E,u}[n]}}$.
Thus, the first-order Taylor expansion of $g\big(\mathbf{P}_u[n],N_E,\mathbf{Q}_E[n],t_{E,u}[n]\big)$ at a certain point $\tilde{\mathbf{P}}_u[n]$ can be expressed as
\begin{equation}
\label{gu_Taylor}
\begin{aligned}
\allowdisplaybreaks
&g\big(\mathbf{P}_u[n],N_E,\mathbf{Q}_E[n],t_{E,u}[n]\big|\tilde{\mathbf{P}}_u[n]\big)\\
=&g\big(\tilde{\mathbf{P}}_u[n],N_E,\mathbf{Q}_E[n],t_{E,u}[n]\big)\\
&+
\text{tr}\big[ \mathcal{F}\big(\tilde{\mathbf{P}}_u[n],N_E,\mathbf{Q}_E[n],t_{E,u}[n]\big)\big(\mathbf{P}_u[n]-\tilde{\mathbf{P}}_u[n]\big)\big],
\end{aligned}
\end{equation}
where $g\big(\tilde{\mathbf{P}}_u[n],N_E,\mathbf{Q}_E[n],t_{E,u}[n]\big)=g\big(\mathbf{P}_u[n],N_E,$ $\mathbf{Q}_E[n],t_{E,u}[n]\big)|_{\mathbf{P}_u[n]=\tilde{\mathbf{P}}_u[n]}$.  Clearly, $g\big(\mathbf{P}_u[n],N_E,\mathbf{Q}_E[n],$ $t_{E,u}[n]|\tilde{\mathbf{P}}_u[n]\big)$ is a linear function in terms of $\mathbf{P}_u[n]$.

In this case, problem \eqref{subproblem3} in the $m$th iteration can be recast into
\begin{subequations}
		\small
	\label{subproblem3_approx}
	\allowdisplaybreaks
	\begin{align}
	\max_{\mathbf{\Phi}_u,\mathbf{\Phi}_a}  &\ \ \frac{1}{T_U}\sum_{n=1}^{N}\tau_n^{m-1}\Big[g\Big(\mathbf{P}_u[n],N_B,\mathbf{Q}_B[n],t^m_{B,u}[n]\Big)\\
	&\ \ \ \ \ \ \ \ \ \ \  \ \ \ \ \ \ \ \
	-g\Big(\mathbf{P}_a[n],N_B,\mathbf{Q}_B[n],t^m_{B,a}[n]\big|\mathbf{P}^{m-1}_a[n]\Big)\nonumber\\
	&\ \ \ \ \ \ \ \ \ \ \ \ \ \ \ \ \ \ \ -g\Big(\mathbf{P}_u[n],N_E,\mathbf{Q}_E[n],t^m_{E,u}[n]\big|\mathbf{P}^{m-1}_u[n]\Big) \nonumber\\
	&\ \ \ \ \ \ \ \ \ \ \ \ \ \ \ \ \ \ \
	+g\Big(\mathbf{P}_a[n],N_E,\mathbf{Q}_E[n],t^{m}_{E,a}[n]\Big)\Big]\nonumber\\	
	\text{s.t.} &  \sum_{n=1}^{N}p_{l}^u[n]\tau_n^{m-1}\leq E^{\max}, \forall l\\
	&\eqref{Pu_constraint},\eqref{Pu_Pa_constraint}, \eqref{Pa_constraint}.\ \ 
	\end{align}
\end{subequations}

Problem \eqref{subproblem3_approx} is convex in terms of $\mathbf{\Phi}_u$ and $\mathbf{\Phi}_a$. Therefore, the optimal variables $(\mathbf{\Phi}_u,\mathbf{\Phi}_a)$ can be solved by utilizing the standard optimization toolbox CVX.

(4) The transmission duration $\mathbf{\Xi}$

In the last step of the $m$th iteration, we focus on the transmission duration $\mathbf{\Xi}$ with the obtained $(\mathbf{t}_{B,u}^m,\mathbf{t}_{E,a}^m)$, $(\mathbf{t}_{B,a}^m,\mathbf{t}_{E,u}^m)$ and $(\mathbf{\Phi}_{u}^m,\mathbf{\Phi}_{a}^m)$, which can be achieved by the following optimization problem
\begin{subequations}
	\allowdisplaybreaks
	\label{subproblem4}
  		  	\small
	\begin{align}
	\max_{\mathbf{\Xi}}  & \frac{1}{T_U}\sum_{n=1}^{N}\tau_n\Big[g\Big(\mathbf{P}_u^m[n],N_B,\mathbf{Q}_B[n],t^m_{B,u}[n]\Big)\\
	&\ \ \ \ \ \ \ \ \ \ \ \ \  \ \ \ \ 
	-g\Big(\mathbf{P}_a^m[n],N_B,\mathbf{Q}_B[n],t^m_{B,a}[n]\Big)\nonumber\\
	&\ \ \ \ \ \ \ \ \ \ \ \ \ \ \ \ \  -g\Big(\mathbf{P}_u^m[n],N_E,\mathbf{Q}_E[n],t^m_{E,u}[n]\Big) \nonumber\\
	&\ \ \ \ \ \ \ \ \ \ \ \ \  \ \ \ \ 
	+g\Big(\mathbf{P}_a^m[n],N_E,\mathbf{Q}_E[n],t^{m}_{E,a}[n]\Big)\Big]\nonumber\\	
	\text{s.t.} & \ \ 
	\sum_{n=1}^{N}p_{l}^{u,m}[n]\tau_n\leq E^{\max}, \forall l \\
	& \sum_{n=1}^{N}\tau_n \leq T_t, \\
	&  0\leq \tau_n\leq \tau^{\max}, \forall n 
	\end{align}
\end{subequations}
which is a convex optimization, and can be solved by adopting the general toolbox.

Based on the above analysis, an overall iterative algorithm for problem \eqref{original_problem} can be achieved. Specifically, in each iteration, the original problem \eqref{original_problem} can be optimized by alternately solving problem \eqref{subproblem1}, problem \eqref{subproblem2}, problem \eqref{subproblem3} and problem \eqref{subproblem4} in an iterative manner. The details of the proposed algorithm can be summarized in Algorithm \ref{SCA_algorithm}. 

\begin{figure*}[ht]
	
	\normalsize
	\begin{equation}
	\label{eqn_taylor}
	\small
	\allowdisplaybreaks
	\begin{aligned}
	\allowdisplaybreaks
	&\bar{\mathcal{G}}\big[\mathbf{\Xi},\mathbf{\Phi}_{u},\mathbf{\Phi}_{a},\mathbf{t}_{B,u},\mathbf{t}_{B,a},\mathbf{t}_{E,u},\mathbf{t}
	_{E,a}\big|\big(\tilde{\mathbf{\Phi}}_u,\tilde{\mathbf{\Phi}}_a\big)\big]\\
	=&\frac{1}{T_U}\sum_{n=1}^{N}\tau_n\Big[g\big(\mathbf{P}_u[n],N_B,\mathbf{Q}_B[n],t_{B,u}[n]\big)
	-g\big(\mathbf{P}_a[n],N_B,\mathbf{Q}_B[n],t_{B,a}[n]|\tilde{\mathbf{P}}_a[n]\big)\\
	&\ \ \ \ \ \ \ \ \ \ \ \ \ \ \ -g\big(\mathbf{P}_u[n],N_E,\mathbf{Q}_E[n],t_{E,u}[n]|\tilde{\mathbf{P}}_u[n]\big)
	+g\big(\mathbf{P}_a[n],N_E,\mathbf{Q}_E[n],t_{E,a}[n]\big)\Big]\\
	=&\frac{1}{T_U}\sum_{n=1}^{N}\tau_n\Bigg\{\Big[\log_2 \det\Big(\mathbf{I}_L+\frac{N_B\mathbf{Q}_B[n]\mathbf{P}_u[n]\mathbf{Q}_B[n]}{\delta^2e^{t_{B,u}[n]}}\Big) +  N_B\log_2e\Big(t_{B,u}[n]-1+e^{-t_{B,u}[n]}\Big)\Big]\\
	& \ \ \ \ \ \ \ \ \ \ \ \ \ \ \ \ - \Big[\log_2 \det\Big(\mathbf{I}_L+\frac{N_B\mathbf{Q}_B[n]\tilde{\mathbf{P}}_a[n]\mathbf{Q}_B[n]}{\delta^2e^{t_{B,a}[n]}}\Big)+  N_B\log_2e\Big(t_{B,a}[n]-1+e^{-t_{B,a}[n]}\Big)\\
	& \ \ \ \ \ \ \ \ \ \ \ \ \ \ \ \ \ \ \ \ \ \ \ \ +\frac{\log_2eN_B}{\delta^2e^{t_{B,a}[n]}}\text{tr}\Big(\big[\mathbf{Q}_B[n]\big(\mathbf{I}_L+\frac{N_B}{\delta^2e^{t_{B,a}[n]}}\mathbf{Q}_B[n]\tilde{\mathbf{P}}_a[n]\mathbf{Q}_B[n]\big)^{-1}\mathbf{Q}_B[n]\big]\big(\mathbf{P}_a[n]-\tilde{\mathbf{P}}_a[n]\big)\Big) \Big]\\
	& \ \ \ \ \ \ \ \ \ \ \ \ \ \ \ \ -\Big[\log_2 \det\Big(\mathbf{I}_L+\frac{N_E\mathbf{Q}_E[n]\tilde{\mathbf{P}}_u[n]\mathbf{Q}_E[n]}{\delta^2e^{t_{E,u}[n]}}\Big) +  N_E\log_2e\Big(t_{E,u}[n]-1+e^{-t_{E,u}[n]}\Big)\\
	&\ \ \ \ \ \ \ \ \ \ \ \ \ \ \ \ \ \ \ \ \ \ \ \  + \frac{\log_2eN_E}{\delta^2e^{t_{E,u}[n]}}\text{tr}\Big(\big[\mathbf{Q}_E[n]\big(\mathbf{I}_L+\frac{N_E}{\delta^2e^{t_{E,u}[n]}}\mathbf{Q}_E[n]\tilde{\mathbf{P}}_u[n]\mathbf{Q}_E[n]\big)^{-1}\mathbf{Q}_E[n]\big]\big(\mathbf{P}_u[n]-\tilde{\mathbf{P}}_u[n]\big)\Big)\Big]\\
	& \ \ \ \ \ \ \ \ \ \ \ \ \ \ \ \ + \Big[\log_2 \det\Big(\mathbf{I}_L+\frac{N_E\mathbf{Q}_E[n]\mathbf{P}_a[n]\mathbf{Q}_E[n]}{\delta^2e^{t_{E,a}[n]}}\Big)
	+  N_E\log_2e\Big(t_{E,a}[n]-1+e^{-t_{E,a}[n]}\Big)\Big]\Bigg\}. 
	\end{aligned}
	\end{equation}
	\hrulefill
	\vspace*{4pt}
\end{figure*} 
\begin{algorithm}[t!]
	\caption{ The proposed iterative algorithm for solving problem $\eqref{original_problem}$  }
	\label{SCA_algorithm}
	\begin{algorithmic}[1]
		\STATE \textbf{Initialize}: the transmisson power $\mathbf{\Phi}_u^{0}=\big[\mathbf{P}_u^0[1],...,\mathbf{P}_u^0[N]\big]$, the AN power $\mathbf{\Phi}_a^{0}=\big[\mathbf{P}_a^0[1],...,\mathbf{P}_a^0[N]\big]$, the transmission duration $\mathbf{\Xi}^0=[\tau_1^0,\tau_2^0,...,\tau_N^0]$ and the accuracy $\epsilon>0$. Set $m=0$. \\ 
		\REPEAT
		\STATE Obtain the optimal set $(\mathbf{t}_{B,u}, \mathbf{t}_{E,a})$ by solving problem $\eqref{subproblem1}$ with the obtained sets ($\mathbf{\Phi}_u^{m-1}$, $\mathbf{\Phi}^{m-1}_{a}$) and $\mathbf{\Xi}^{m-1}$,\\
		\STATE Obtain the optimal set $(\mathbf{t}_{B,a}, \mathbf{t}_{E,u})$ by solving problem $\eqref{subproblem2}$ with the obtained sets ($\mathbf{\Phi}_u^{m-1}$, $\mathbf{\Phi}^{m-1}_{a}$) and $\mathbf{\Xi}^{m-1}$,\\
		\STATE Obtain the optimal set ($\mathbf{\Phi}_u$, $\mathbf{\Phi}_{a}$) by solving problem $\eqref{subproblem3}$ with the obtained sets $(\mathbf{t}^m_{B,u},\mathbf{t}^m_{E,a})$, $(\mathbf{t}^m_{B,a}, \mathbf{t}^m_{E,u})$ and $\mathbf{\Xi}^{m-1}$,\\
		\STATE Obtain the optimal set $\mathbf{\Xi}$ by solving problem $\eqref{subproblem4}$ with the obtained sets $(\mathbf{t}^m_{B,u}, \mathbf{t}^m_{E,a})$, $(\mathbf{t}^m_{B,a}, \mathbf{t}^m_{E,u})$ and ($\mathbf{\Phi}_u^{m}$, $\mathbf{\Phi}^{m}_{a}$),
		\STATE $m-1\leftarrow m$. \\
		\UNTIL The fractional increase of the objective function is below the threshold $\epsilon>0$.
	\end{algorithmic}
\end{algorithm}

\subsection{Convergence Performance Analysis}  
To analyze the convergence of the proposed algorithm, we
present the following Proposition 1.

\begin{prop}
	The approximation of the secrecy throughput is monotonically increasing in each iteration, i.e.,
	\begin{equation}
	R^{as}\big(\mathbf{\Xi}^{m},\mathbf{\Phi}_{u}^m,\mathbf{\Phi}_{a}^m\big)\geq 	R^{as}\big(\mathbf{\Xi}^{m-1},\mathbf{\Phi}_{u}^{m-1},\mathbf{\Phi}_{a}^{m-1}\big),
	\end{equation} 
	which demonstrates the convergence of the proposed algorithm.
\end{prop}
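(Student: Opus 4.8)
The plan is to show that the surrogate objective $\mathcal{G}$ of \eqref{eqx_approximate} never decreases as Algorithm~1 cycles through its four blocks, and then to identify, via Theorem~\ref{theorem_for_Rap}, the value of $\mathcal{G}$ after the two auxiliary updates with $R^{as}$. First I would establish the \emph{anchor}: once blocks $(\mathbf{t}_{B,u},\mathbf{t}_{E,a})$ and $(\mathbf{t}_{B,a},\mathbf{t}_{E,u})$ are solved to optimality by \eqref{subproblem1}--\eqref{subproblem2} for the incumbent $(\mathbf{\Phi}_u^{m-1},\mathbf{\Phi}_a^{m-1},\mathbf{\Xi}^{m-1})$, the resulting value of $\mathcal{G}$ equals exactly $R^{as}(\mathbf{\Xi}^{m-1},\mathbf{\Phi}_u^{m-1},\mathbf{\Phi}_a^{m-1})$, because the inner minimization over $(\mathbf{t}_{B,u},\mathbf{t}_{E,a})$ and maximization over $(\mathbf{t}_{B,a},\mathbf{t}_{E,u})$ are precisely what the max--min in \eqref{R_ap} optimizes out. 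Here I would use that each $g$-term in \eqref{eqx_approximate} contains a single auxiliary variable, so the two blocks decouple and are each solved exactly.

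Next I would treat the power and duration blocks as monotone moves of $\mathcal{G}$ with the auxiliary variables frozen at their step-$(1)$--$(2)$ values $\mathbf{t}^{m}$. For \eqref{subproblem3_approx} I would invoke the minorization--maximization inequality: the two concave terms $g(\mathbf{P}_a[n],N_B,\cdots)$ and $g(\mathbf{P}_u[n],N_E,\cdots)$ entering with a minus sign are replaced in \eqref{eqn_taylor} by their first-order expansions, which over-estimate a concave function, so the negated terms under-estimate and hence $\bar{\mathcal{G}}\le\mathcal{G}$ everywhere with equality at the expansion point $(\mathbf{\Phi}_u^{m-1},\mathbf{\Phi}_a^{m-1})$; maximizing the minorant therefore cannot decrease $\mathcal{G}$. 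The duration step \eqref{subproblem4} is an exact convex maximization over $\mathbf{\Xi}$ and likewise does not decrease $\mathcal{G}$. Concatenating, I would obtain $\mathcal{G}(\mathbf{\Xi}^{m},\mathbf{\Phi}_u^{m},\mathbf{\Phi}_a^{m},\mathbf{t}^{m})\ge R^{as}(\mathbf{\Xi}^{m-1},\mathbf{\Phi}_u^{m-1},\mathbf{\Phi}_a^{m-1})$.

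The hard part will be closing the loop, i.e.\ passing from $\mathcal{G}(\mathbf{\Xi}^{m},\mathbf{\Phi}_u^{m},\mathbf{\Phi}_a^{m},\mathbf{t}^{m})$ back to $R^{as}(\mathbf{\Xi}^{m},\mathbf{\Phi}_u^{m},\mathbf{\Phi}_a^{m})$. The obstacle is that after the power update the frozen $\mathbf{t}^{m}$ are no longer the inner optimizers at the new powers, and because $R^{as}$ is a genuine max--min, re-solving the auxiliaries drives $\mathcal{G}$ \emph{down} on the minimizing block $(\mathbf{t}_{B,u},\mathbf{t}_{E,a})$ but \emph{up} on the maximizing block $(\mathbf{t}_{B,a},\mathbf{t}_{E,u})$, so the net sign is not immediate; concretely, the two ``added'' rate terms evaluated at the stale $\mathbf{t}^m$ over-estimate the corresponding $\min_t g$ terms in $R^{as}$, which is exactly what prevents $\bar{\mathcal{G}}(\cdot,\mathbf{t}^m)$ from being a minorant of $R^{as}$ itself. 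To control this I would use two structural facts. First, the duration update cannot spoil auxiliary-optimality: since $\tau_n$ enters \eqref{eqx_approximate} only as a nonnegative scaling of each per-slot bracket, the minimizer of every $g$-term over its auxiliary variable is independent of $\mathbf{\Xi}$, so the only staleness is introduced by the power update. Second, I would re-solve the auxiliary blocks at $(\mathbf{\Xi}^{m},\mathbf{\Phi}_u^{m},\mathbf{\Phi}_a^{m})$ (which steps $(1)$--$(2)$ of iteration $m+1$ already do) and use Theorem~\ref{theorem_for_Rap} to equate the re-optimized value with $R^{as}(\mathbf{\Xi}^{m},\mathbf{\Phi}_u^{m},\mathbf{\Phi}_a^{m})$; the crux I would then need to verify is that the surrogate in \eqref{eqn_taylor} can be arranged as a genuine minorant of $R^{as}$, tight at the current iterate, by taking the expansion point to be the current auxiliary optimizer so that the envelope identity aligns the surrogate's gradient with that of $R^{as}$ (equivalently, by interleaving the re-minimization of $(\mathbf{t}_{B,u},\mathbf{t}_{E,a})$ inside the power step). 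With the inequalities oriented this way the four per-block bounds chain into $R^{as}(\mathbf{\Xi}^{m},\mathbf{\Phi}_u^{m},\mathbf{\Phi}_a^{m})\ge R^{as}(\mathbf{\Xi}^{m-1},\mathbf{\Phi}_u^{m-1},\mathbf{\Phi}_a^{m-1})$, and since the feasible set in \eqref{eqn_relax3} is compact, $R^{as}$ is bounded above, so the monotone sequence converges.
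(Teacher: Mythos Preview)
Your overall strategy---establish that the auxiliary optimizations recover $R^{as}$, then show the power and duration updates do not decrease the surrogate $\mathcal{G}$ via the minorization--maximization inequality $\bar{\mathcal{G}}\le\mathcal{G}$---is exactly the paper's. The ``anchor'' you state, the Taylor-minorant argument for \eqref{subproblem3_approx}, and the observation that $\tau_n$ enters only as a nonnegative weight so that the auxiliary optimizers are $\mathbf{\Xi}$-independent, all appear in the paper's proof (the latter implicitly).

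Where you diverge is at the step you call ``the hard part.'' You try to close the loop by going \emph{upward} from $\mathcal{G}(\mathbf{\Xi}^{m},\mathbf{\Phi}_u^{m},\mathbf{\Phi}_a^{m},\mathbf{t}^{m})$ to $R^{as}(\mathbf{\Xi}^{m},\mathbf{\Phi}_u^{m},\mathbf{\Phi}_a^{m})$, run into the sign obstruction from the inner minimization, and then propose either to modify Algorithm~1 (interleave a re-minimization of $(\mathbf{t}_{B,u},\mathbf{t}_{E,a})$ inside the power block) or to invoke an envelope argument. The paper instead works \emph{downward} from $R^{as}(\mathbf{\Xi}^{m},\mathbf{\Phi}_u^{m},\mathbf{\Phi}_a^{m})$: in \eqref{R_ap_convergence} it (i) freezes the \emph{maximizing} auxiliaries at the stale $(\mathbf{t}_{B,a}^{m},\mathbf{t}_{E,u}^{m})$, which is legitimate because any feasible point lower-bounds the outer $\max$; (ii) replaces $\mathcal{G}$ by its pointwise minorant $\bar{\mathcal{G}}$, which preserves the $\geq$ after the remaining $\min$ over $(\mathbf{t}_{B,u},\mathbf{t}_{E,a})$; and then (iii) identifies that $\min$ with the value at the stale $(\mathbf{t}_{B,u}^{m},\mathbf{t}_{E,a}^{m})$ in \eqref{xxx}. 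This lands directly on $\bar{\mathcal{G}}[\mathbf{\Xi}^{m},\mathbf{\Phi}_u^{m},\mathbf{\Phi}_a^{m},\mathbf{t}^{m}\mid(\mathbf{\Phi}_u^{m-1},\mathbf{\Phi}_a^{m-1})]$, which \eqref{abab} has already shown dominates $R^{as}$ at the previous iterate, so the chain \eqref{convergence_proof} closes without touching the algorithm. Your instinct that the minimizing block is the delicate one is correct---step \eqref{xxx} is precisely where the paper absorbs that difficulty---but the paper's route avoids the algorithmic modification you propose; you should follow the downward chain through \eqref{R_ap_convergence}--\eqref{xxx} rather than attempting to bound $R^{as}$ from below by a stale evaluation of $\mathcal{G}$.
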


\begin{proof}
	Based on the analysis in Section III-B, we could achieve an approximation of $\mathcal{G}(\mathbf{\Xi},\mathbf{\Phi}_{u},\mathbf{\Phi}_{a},\mathbf{t}_{B,u},\mathbf{t}_{B,a},\mathbf{t}_{E,u},$ $\mathbf{t}
		_{E,a})$ as shown in \eqref{eqn_taylor}. Recall that any concave function is upper bounded by its first-order Taylor expansion at a given local point \cite{Boyd-2004}. The following upper-bounded expressions hold
	\begin{equation}
	\allowdisplaybreaks
	\small
	\label{Taylor_Ba}
	\begin{aligned}
	& g\big(\mathbf{P}_a[n],N_B,\mathbf{Q}_B[n],t_{B,a}[n]\big)\\
	\leq & g\big(\mathbf{P}_a[n],N_B,\mathbf{Q}_B[n],t_{B,a}[n]\big|\tilde{\mathbf{P}}_a[n]\big),
	\end{aligned}
	\end{equation} 
	and 
	\begin{equation}
	\small
	\label{Taylor_Eu}
	\begin{aligned}
	& g\big(\mathbf{P}_u[n],N_E,\mathbf{Q}_E[n],t_{E,u}[n]\big)\\
	\leq &  g\big(\mathbf{P}_u[n],N_E,\mathbf{Q}_E[n],t_{E,u}[n]\big|\tilde{\mathbf{P}}_u[n]\big),
	\end{aligned}
	\end{equation}
	where the equalities in \eqref{Taylor_Ba} and \eqref{Taylor_Eu} are met when $\mathbf{P}_a[n]=\tilde{\mathbf{P}}_a[n]$ and $\mathbf{P}_u[n]=\tilde{\mathbf{P}}_u[n]$, respectively.
	
	Thus, we could achieve 
	\begin{equation}
	\label{Taylor_for_objective}
	\begin{aligned}
	\allowdisplaybreaks
	& \mathcal{G}(\mathbf{\Xi},\mathbf{\Phi}_{u},\mathbf{\Phi}_{a},\mathbf{t}_{B,u},\mathbf{t}_{B,a},\mathbf{t}_{E,u},\mathbf{t}	
	_{E,a})\\
	\geq &\bar{\mathcal{G}}\big[\mathbf{\Xi},\mathbf{\Phi}_{u},\mathbf{\Phi}_{a},\mathbf{t}_{B,u},\mathbf{t}_{B,a},\mathbf{t}_{E,u},\mathbf{t}
	_{E,a}\big|\big(\tilde{\mathbf{\Phi}}_u,\tilde{\mathbf{\Phi}}_a\big)\big],
	\end{aligned}
	\end{equation}
	where the equality holds when $\mathbf{\Phi}_{u}=\tilde{\mathbf{\Phi}}_u$ and $\mathbf{\Phi}_{a}=\tilde{\mathbf{\Phi}}_a$.
	
	Based on the fact \eqref{Taylor_for_objective}, we present the convergence of Algorithm 1, as shown next.      
	In the $m$th iteration, the optimal solutions ($\mathbf{\Phi}_u^{m}$, $\mathbf{\Phi}^{m}_{a}$), ($\mathbf{t}
	_{B,u}^{m}$, $\mathbf{t}_{E,a}^{m}$), ($\mathbf{t}
	_{B,a}^{m}$, $\mathbf{t}_{E,u}^{m}$) and $\mathbf{\Xi}^m$ can be obtained by Algorithm 1.
	Based on the properties of the saddle point \cite{Brayton-1979}, the following relationship in the $m$th iteration holds
	\begin{equation}
	\small
	\allowdisplaybreaks
	\label{aaa}
	\begin{aligned}
	&\bar{\mathcal{G}}\big[\mathbf{\Xi}^m,\mathbf{\Phi}_{u}^m,\mathbf{\Phi}_{a}^m,\mathbf{t}_{B,u}^m,\mathbf{t}_{B,a}^m,\mathbf{t}_{E,u}^m,\mathbf{t}
	_{E,a}^m\big|\big(\tilde{\mathbf{\Phi}}_u^{m-1},\tilde{\mathbf{\Phi}}_a^{m-1}\big)\big]\\
	\geq &  \bar{\mathcal{G}}\big[\mathbf{\Xi},\mathbf{\Phi}_{u},\mathbf{\Phi}_{a},\mathbf{t}_{B,u}^m,\mathbf{t}_{B,a}^m,\mathbf{t}_{E,u}^m,\mathbf{t}
	_{E,a}^m\big|\big(\tilde{\mathbf{\Phi}}_u^{m-1},\tilde{\mathbf{\Phi}}_a^{m-1}\big)\big].
	\end{aligned}
	\end{equation}
	
	Let $\mathbf{\Phi}_{u}=\mathbf{\Phi}_{u}^{m-1}$, $\mathbf{\Phi}_{a}=\mathbf{\Phi}_{a}^{m-1}$ and $\mathbf{\Xi}=\mathbf{\Xi}^{m-1}$. Then, it follows from \eqref{aaa} that 
	\begin{equation}
	\small
	\allowdisplaybreaks
	\label{abab}
	\begin{aligned}
	&\bar{\mathcal{G}}\big[\mathbf{\Xi}^m,\mathbf{\Phi}_{u}^m,\mathbf{\Phi}_{a}^m,\mathbf{t}_{B,u}^m,\mathbf{t}_{B,a}^m,\mathbf{t}_{E,u}^m,\mathbf{t}
	_{E,a}^m\big|\big(\tilde{\mathbf{\Phi}}_u^{m-1},\tilde{\mathbf{\Phi}}_a^{m-1}\big)\big]\\
	\geq & \bar{\mathcal{G}}\big[\mathbf{\Xi}^{m-1},\mathbf{\Phi}_{u}^{m-1},\mathbf{\Phi}_{a}^{m-1},\mathbf{t}_{B,u}^m,\mathbf{t}_{B,a}^m,\mathbf{t}_{E,u}^m,\mathbf{t}
	_{E,a}^m\big|\big(\tilde{\mathbf{\Phi}}_u^{m-1},\tilde{\mathbf{\Phi}}_a^{m-1}\big)\big]\\
	\stackrel{(a)}{=}& \mathcal{G}(\mathbf{\Xi}^{m-1},{\Phi}_{u}^{m-1},\mathbf{\Phi}_{a}^{m-1},\mathbf{t}_{B,u}^{m},\mathbf{t}_{B,a}^{m},\mathbf{t}_{E,u}^{m},\mathbf{t}	
	_{E,a}^{m}),
	\end{aligned}
	\end{equation} 
	where (a) holds according to the analysis in \eqref{Taylor_for_objective}.  
	
	According to \eqref{R_ap}, we know that \footnote{Due to the limited space, we simplify the constraint $\mathbf{t}_{q,x}\geq \mathbf{0}$ as $\mathbf{t}_{q,x}$, where $q\in \{B,E\}$ and $x \in \{u,a\}$.}
	\begin{equation}
	\allowdisplaybreaks
	\small
	\label{R_ap_convergence}
	\begin{aligned}
	&R^{as}\big(\mathbf{\Xi}^{m},\mathbf{\Phi}_{u}^m,\mathbf{\Phi}_{a}^m\big)\\ 
	=&
	\begin{aligned}
	\max_{\tiny
		\mathbf{t}_{B,a},
		\mathbf{t}_{E,u}}
	\end{aligned} 
	\begin{aligned} \
	\min_{\tiny
		\mathbf{t}_{B,u},
		\mathbf{t}_{E,a}}
	\end{aligned} 
	\ \   \mathcal{G}(\mathbf{\Xi}^{m},\mathbf{\Phi}_{u}^m,\mathbf{\Phi}_{a}^m,\mathbf{t}_{B,u},\mathbf{t}_{B,a},
	\mathbf{t}_{E,u},\mathbf{t}
	_{E,a})\\
	\stackrel{(b)}{=}& \begin{aligned} \
	\min_{\tiny
		\mathbf{t}_{B,u},
		\mathbf{t}_{E,a}}
	\end{aligned} 
	\mathcal{G}(\mathbf{\Xi}^{m},\mathbf{\Phi}_{u}^m,\mathbf{\Phi}_{a}^m,\mathbf{t}_{B,u},\mathbf{t}_{B,a}^m,
	\mathbf{t}_{E,u}^m,\mathbf{t}_{E,a})\\
	\stackrel{(c)}{\geq} &
	\begin{aligned}
	\min_{\tiny
		\mathbf{t}_{B,u},
		\mathbf{t}_{E,a}}
	\end{aligned} 
	\bar{\mathcal{G}}\big[\mathbf{\Xi}^{m},\mathbf{\Phi}_{u}^m,\mathbf{\Phi}_{a}^m,\mathbf{t}_{B,u},\mathbf{t}_{B,a}^m,\mathbf{t}_{E,u}^m,\mathbf{t}
	_{E,a}\big|\big(\tilde{\mathbf{\Phi}}_u^{m-1},\tilde{\mathbf{\Phi}}_a^{m-1}\big)\big],
	\end{aligned}   
	\end{equation}    
	where step (b) holds since ($\mathbf{t}
	_{B,a}^{m}$, $\mathbf{t}_{E,u}^{m}$) are the optimal solutions by using Algorithm 1, and step (c) holds due to the first-order Taylor expansion as shown in \eqref{Taylor_for_objective}.
	
	Furthermore, we have
	\begin{equation}
	\allowdisplaybreaks
	\label{xxx}
	\small
	\begin{aligned}
	&\bar{\mathcal{G}}\big[\mathbf{\Xi}^{m},\mathbf{\Phi}_{u}^m,\mathbf{\Phi}_{a}^m,\mathbf{t}_{B,u}^m,\mathbf{t}_{B,a}^m,\mathbf{t}_{E,u}^m,\mathbf{t}
	_{E,a}^m\big|\big(\tilde{\mathbf{\Phi}}_u^{m-1},\tilde{\mathbf{\Phi}}_a^{m-1}\big)\big]\\
	=&
	\begin{aligned}
	\min_{\tiny
		\mathbf{t}_{B,u},
		\mathbf{t}_{E,a}}
	\end{aligned} 
	\bar{\mathcal{G}}\big[\mathbf{\Xi}^{m},\mathbf{\Phi}_{u}^m,\mathbf{\Phi}_{a}^m,\mathbf{t}_{B,u},\mathbf{t}_{B,a}^m,\mathbf{t}_{E,u}^m,\mathbf{t}
	_{E,a}\big|\big(\tilde{\mathbf{\Phi}}_u^{m-1},\tilde{\mathbf{\Phi}}_a^{m-1}\big)\big].
	\end{aligned}
	\end{equation}
	
	Thus, according to \eqref{R_ap}, \eqref{abab}, \eqref{R_ap_convergence} and \eqref{xxx}, it follows that
	\begin{equation}
	\label{convergence_proof}
	\allowdisplaybreaks
	\small 
	\begin{aligned}
	\allowdisplaybreaks
	&R^{as}\big(\mathbf{\Xi}^{m},\mathbf{\Phi}_{u}^m,\mathbf{\Phi}_{a}^m\big)\\
	\geq& \bar{\mathcal{G}}\big[\mathbf{\Xi}^{m},\mathbf{\Phi}_{u}^m,\mathbf{\Phi}_{a}^m,\mathbf{t}_{B,u}^m,\mathbf{t}_{B,a}^m,\mathbf{t}_{E,u}^m,\mathbf{t}
	_{E,a}^m\big|\big(\tilde{\mathbf{\Phi}}_u^{m-1},\tilde{\mathbf{\Phi}}_a^{m-1}\big)\big] \\
	\geq &
	\mathcal{G}(\mathbf{\mathbf{\Xi}^{m-1},\Phi}_{u}^{m-1},\mathbf{\Phi}_{a}^{m-1},\mathbf{t}_{B,u}^{m},\mathbf{t}_{B,a}^{m},\mathbf{t}_{E,u}^{m},\mathbf{t}	
	_{E,a}^{m})\\
	\stackrel{(d)}{=}
	&\begin{aligned}
	\max_{\tiny
		\mathbf{t}_{B,a},
		\mathbf{t}_{E,u}
	}
	\end{aligned} 
	\begin{aligned} \
	\min_{\tiny
		\mathbf{t}_{B,u},
		\mathbf{t}_{E,a}
	}
	\end{aligned} 
	\ \  
	\mathcal{G}(\mathbf{\Xi}^{m-1},\mathbf{\Phi}_{u}^{m-1},\mathbf{\Phi}_{a}^{m-1},\mathbf{t}_{B,u},\mathbf{t}_{B,a},
	\mathbf{t}_{E,u},\mathbf{t}
	_{E,a})\\
	\stackrel{(e)}{=}
	&R^{as}\big(\mathbf{\Xi}^{m-1},\mathbf{\Phi}_{u}^{m-1},\mathbf{\Phi}_{a}^{m-1}\big),
	\end{aligned}
	\end{equation}
	where step (d) holds since ($\mathbf{t}_{B,u}$, 
	$\mathbf{t}_{E,a}$) and ($\mathbf{t}_{B,a}$, 
	$\mathbf{t}_{E,u}$) are the optimal solutions in Algorithm 1, and step (e) holds according to the closed form of the secrecy throughput in \eqref{R_ap}.
	
	\eqref{convergence_proof} indicates that $R^{as}\big(\mathbf{\Xi},\mathbf{\Phi}_{u},\mathbf{\Phi}_{a}\big)$ is nondecreasing in each iteration, which can assure the convergence of Algorithm 1. This completes the proof.
\end{proof}
\vspace{-0.2cm}
\section{Numerical Results and Discussions}
In this section, the performance of our proposed scheme is verified by simulation. We consider a $1000$m $\times$ $1000$m square cell, and suppose there are $N=10$ legitimate receivers which are randomly distributed in the cell. The multiple single-antenna rotary-wing UAVs, forming a UAV swarm, hover above the scheduled legitimate receiver to provide the confidential messages during a transmission duration. When the UAVs hover to serve the scheduled one, they are randomly dispatched in the circle of radius $50$m with altitude $100$m $\sim$ $200$m, which is modeled as a cylinder in Fig. \ref{AveSecRate_cylinder}. Then, the UAVs fly to the next scheduled one within a flying duration. The eavesdropper randomly locates with a safety distance $r_e=100$m away from the scheduled legitimate user, which is modeled as a circle in Fig. \ref{AveSecRate_cylinder}, to hide himself, and moves following the trajectory of the UAV swarm for better eavesdropping. Unless otherwise specified, the system parameters are set as follows: the number of antennas for the legitimate users $N_B=5$, the number of antennas for the eavesdropper $N_E=3$, the transmission duration budget $\tau^{\max}=8s$, the total transmission duration $T_t=100s$, the consecutive period $T_U=210s$ \cite{Wu-TWC2018}, $f=2.4$GHz \cite{Xuan-JCIN2018}, $c=3\times 10^8$m/s, $a=5.0188$, $b=0.3511$  \cite{Al-Hourani-2014WCL} and the noise covariance $\delta^2=-107$dBm \cite{Feng-2013}. The threshold presented in Algorithm 1 is fixed as $\epsilon=10^{-3}$. We consider the typical propagation environments using the following $(\eta_{\text{LoS}},\eta_{\text{NLoS}})$ pairs $(0.1, 21)$, $(1.0, 20)$, $(1.6, 23)$, $(2.3, 34)$ corresponding to suburban, urban, dense urban, and highrise urban, respectively \cite{Al-Hourani-2014WCL}.


\begin{figure} [t!]  
	\centering 		 		 
	\includegraphics[width=0.8\columnwidth]{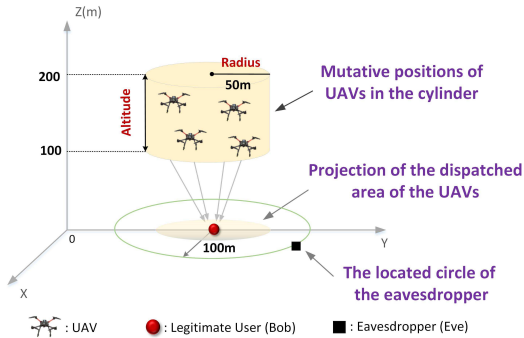}       	 
	\caption{Illustration of positions of UAVs and the eavesdropper during a transmission duration. The UAVs hover at arbitrary positions in the cylinder above the scheduled legitimate user to serve it. The eavesdropper locates in a circle with a safety distance $100$m away from the legitimate user.} 
	\label{AveSecRate_cylinder}
\end{figure}


\begin{figure}
	\centering 		 		 
	\includegraphics[width=0.8\columnwidth]{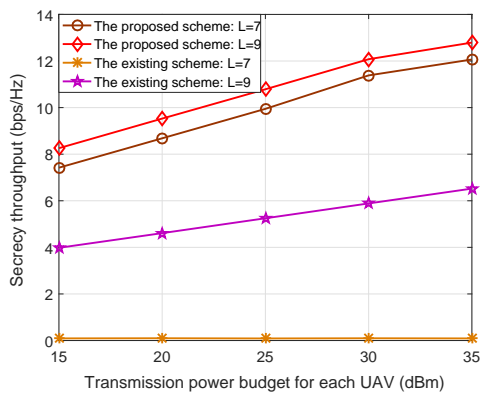}       	 
	\caption{Secrecy throughput based on different transmission strategies.} 
	\label{AveSecRate_comparsion}
\end{figure}

To depict the performance of the proposed scheme, we compare it with the existing scheme in Fig. \ref{AveSecRate_comparsion} by considering the suburban environment. We assume
$\mathbf{P}_u^0[n]=\bar{p}_u\mathbf{I}_L, \forall n$, where $\bar{p}_u=20$dBm, $\mathbf{P}_a^0[n]=\bar{p}_a\mathbf{I}_L, \forall n$, where $\bar{p}_a=10$dBm, $\mathbf{\Xi}^0=[1,1,...,1]$, $E^{\max}=150$J. In the existing scheme, confidential messages are transmitted for the legitimate receivers by UAVs, and AN is transmitted in the null space of the legitimate channel according to the instantaneous CSI $\mathbf{H}_B$ via precoding. Similar to \cite{2016TVT-Na}, the power allocation between the desired signals and AN is considered, where the ratio $\phi=\frac{N_B}{N_E+N_B}$ of the transmission power budget is allocated to the desired signals, and the ratio $1-\phi$ is allocated to AN. Note that, this scheme has been widely investigated in the existing literatures but the large-scale CSI has not been taken into account. From the simulation results, we can observe that the proposed scheme presents a significant performance gain over the existing scheme in the case $L=7$ and 9, which can be explained as follows. In our proposed scheme, the UAV swarm could adaptively transmit the confidential messages in a higher power when it is close to the legitimate users based on the large-scale CSI, and allocate the higher power for AN when the swarm is close to the eavesdropper, which promotes the secrecy performance improvement. However, due to the inflexible signal transmission mode in the existing scheme, the confidential messages and AN are transmitted in orthogonal channel spaces. Although the eavesdropper can be well suppressed by exploiting AN, it is not able to improve receiving quality of the legitimate receivers with the fixed power of the desired signals. Therefore, a poor secrecy performance is achieved.

To further illustrate the convergence of Algorithm 1, we present the convergence process for $100$ randomly-generated system topologies by the proposed Algorithm 1 in the suburban environment in Fig. \ref{Iteration}. We initialize 
$\mathbf{P}_u^0[n]=\bar{p}_u\mathbf{I}_L, \forall n$, where $\bar{p}_u=30$dBm, $\mathbf{P}_a^0[n]=\bar{p}_a\mathbf{I}_L, \forall n$, where $\bar{p}_a=0$dBm, $\mathbf{\Xi}^0=[1,1,...,1]$, and set   
the transmission power budget $P^{\max}=30$dBm, the transmission energy budget $E^{\max}=300$J, the number of the UAVs $L=7$. It can be observed that for the most cases, Algorithm 1 can converge within $6$ iterations and the ratio of iterations between $4 \sim 5$ could achieve $86\% $,  which demonstrates the validity of the proposed scheme.


\begin{figure}  
	\centering       		 			 
	\includegraphics[width=0.8\columnwidth]{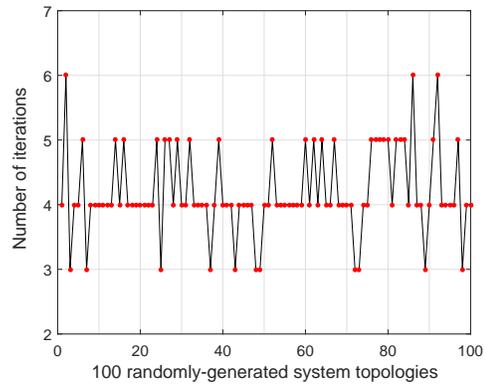}        		  		
	\caption{Convergence of the proposed algorithm.} 
	\label{Iteration}
	\vspace{-0.5cm}
\end{figure}

Fig. \ref{AverSecRate_vs_diffPS} illustrates secrecy throughput by Algorithm 1 versus the transmission power budget for each UAV in the suburban scenario. We assume $\mathbf{P}_u^0[n]=\bar{p}_u\mathbf{I}_L, \forall n$, where $\bar{p}_u=20$dBm, $\mathbf{P}_a^0[n]=\bar{p}_a\mathbf{I}_L, \forall n$, where $\bar{p}_a=5$dBm, $\mathbf{\Xi}^0=[1,1,...,1]$, $E^{\max}=150$J. From Fig. \ref{AverSecRate_vs_diffPS}, we can see that secrecy throughput increases when the transmission power budget for each UAV becomes large. That is due to the fact that the increasing transmission power can enhance the achievable ergodic rate at the legitimate user or the eavesdropper. Furthermore, in our proposed scheme, the confidential message and AN are transmitted independently without cooperation at each UAV. Based on the large-scale CSI, each UAV could allocate more power to the confidential message to improve the achievable rate of Bob when the UAV swarm is close to the legitimate users. When the swarm is close to Eve, more power would be allocated to AN for decreasing the signal receiving quality of Eve. Also, for the same transmission power budget of each UAV, secrecy throughput increases as the number of UAVs in the swarm. Obviously, the larger the number of UAVs is, the higher the power of the confidential messages is, which significantly enhances the secrecy performance of the system.   


\begin{figure}[t!]
	\centering       		 			 
	\includegraphics[width= 0.8\columnwidth]{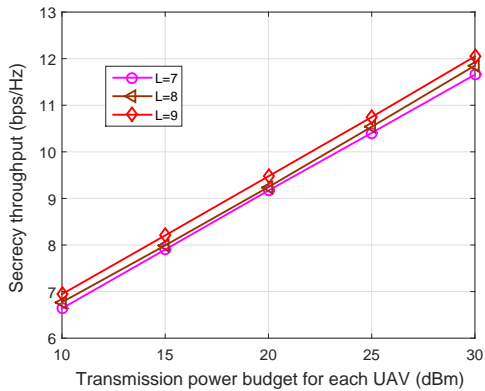}       	 
	\caption{Secrecy throughput versus transmission power budget for each UAV.} 
	\label{AverSecRate_vs_diffPS}
\end{figure} 

Fig. \ref{AverSecRate_vs_diffES} depicts secrecy throughput achieved by Algorithm 1 versus the total transmission energy budget for each UAV in the suburban scenario. It is assumed that $\mathbf{P}_u^0[n]=\bar{p}_u\mathbf{I}_L, \forall n$, where $\bar{p}_u=23$dBm, $\mathbf{P}_a^0[n]=\bar{p}_a\mathbf{I}_L, \forall n$, where $\bar{p}_a=10$dBm, $\mathbf{\Xi}^0=[1,1,...,1]$, and $P^{\max}=35$dBm. From Fig. \ref{AverSecRate_vs_diffES}, it can be seen that secrecy throughput increases as the total transmission energy constraint for each UAV becomes large. That is because when the transmission energy budget at each UAV increases, the transmission power at each UAV is getting large over its flight. By using the large-scale CSI, the power of the confidential messages and the AN power can be intelligently designed. More power can be transmitted for the confidential messages when the swarm is close to Bob, and more power is allocated for AN when the swarm is near to Eve. Thus, a positive secrecy throughput can be achieved. Furthermore, we can also see that at the same transmission energy budget of each UAV, secrecy throughput grows with the increasing number of UAVs in the swarm. That is due to the fact that as the number of UAVs increases, the total transmission energy of the UAV swarm is getting high. In this case, secrecy throughput could increase. 

\begin{figure}[t!]
	\centering       		 			 
	\includegraphics[width= 0.8\columnwidth]{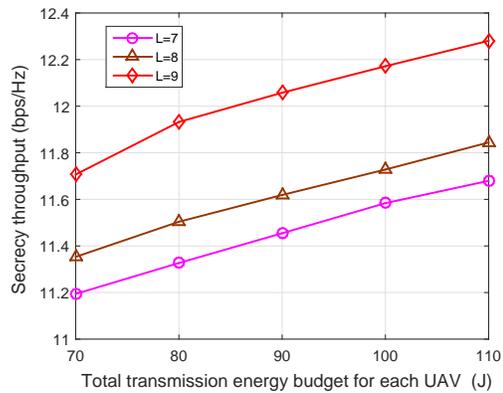}       	 
	\caption{Secrecy throughput versus total transmission energy budget for each UAV.} 
	\label{AverSecRate_vs_diffES}
\end{figure}


\begin{figure}[t!]
	\centering       		 			 
	\includegraphics[width= 0.8\columnwidth]{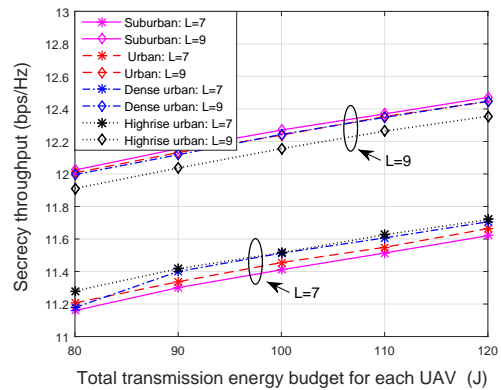}       	 
	\caption{ Secrecy throughput versus total transmission energy budget for each UAV in different typical environments.} 
	\label{AveSecRate_diffPs_Environments}
\end{figure}


Fig. \ref{AveSecRate_diffPs_Environments} presents secrecy throughput obtained by Algorithm 1 versus the total transmission energy budget for each UAV in different urban scenarios. We specify the initial transmission power  $\mathbf{P}_u^0[n]=\bar{p}_u\mathbf{I}_L, \forall n$, where $\bar{p}_u=25$dBm, $\mathbf{P}_a^0[n]=\bar{p}_a\mathbf{I}_L, \forall n$, where $\bar{p}_a=10$dBm, $P^{\max}=35$dBm and $\mathbf{\Xi}^0=[1,1,...,1]$. It can be observed that secrecy throughput is various in the different propagation environments. That comes from that the power loss is highly dependent on the practical urban environments \cite{Al-Hourani-2014WCL}, which leads to the divergence of secrecy throughput. Also, it can be seen that as the total transmission energy budget for each UAV becomes large, secrecy throughput increases. Furthermore, secrecy throughput in the case $L=9$ is larger than that in the case $L=7$. That is because that the total transmission energy of the UAV swarm increases with the number of UAVs, which leads to the higher secrecy throughput.

\section{Conclusions}
In this paper, we investigated power allocation of AN-assist secure transmission for the UAV swarm-enabled aerial CoMP, where the eavesdropper moved following the trajectory of the swarm for better eavesdropping. We took the composite channel including small-scale and large-scale fading into account. Considering the hardness in acquiring the perfect CSI, we maximized secrecy throughput by utilizing only the large-scale CSI of the legitimate receivers and the eavesdropper. Specifically, we designed the problem by jointly optimizing the transmission power of the desired signals and the AN power under the constraints of the total transmission energy of each rotary-wing UAV during its flying period and the transmission durations for all the legitimate users. The formulated problem was a non-convex one. To handle that, we first achieved a closed form of the secrecy throughput, and then provided an iterative algorithm for the problem. Finally, we evaluated the effectiveness of our proposed iterative algorithm by means of simulation results. 

While an efficient power allocation scheme has been developed to enhance the secure transmission for the UAV swarm-enabled aerial CoMP, in future works we plan to consider a pre-processing framework for the UAV swarm-enabled aerial networks, where the user scheduling, UAV trajectory and UAV altitude will be jointly considered, in order to provide guidelines for the secure design of the system. Furthermore, we will also develop a novel power allocation method based on \cite{TSG2019-ZhenLi,TFS2019-ZhenLi,TSG2018-ZhenLi,TCS2018-ZhenLi} by simultaneously considering the blockage caused by the abundant transmission and the transmission energy constraint to improve the secrecy performance of the system.
\vspace{-0.5cm}


%

\appendices
\section{Proof of Theorem 1}
\label{appendix_theorem_Rap}
Considering the identical structure of four terms in \eqref{eqn}, we focus on the first term and denote it as $f_{B,u}(\mathbf{\Xi},\mathbf{\Phi}_u)=$ $\frac{1}{T_U} \sum_{n=1}^{N}\tau_n\mathbb{E}_{\mathbf{S}_{B}[n]}\Big[\log_2 \det \Big(\mathbf{I}_{N_B}+\frac{\mathbf{H}_{B}[n]\mathbf{P}_{u}[n](\mathbf{H}_{B}[n])^H}{\delta^2}\Big)\Big]$ for clarity. Referring to the remarkable studies in \cite{Feng-2013, Feng-WOCC}, the closed form of $f_{B,u}(\mathbf{\Xi},\mathbf{\Phi}_u)$ could be expressed as
\begin{equation} 
\small
\label{approx_eq}
\begin{aligned}
&f_{B,u}(\mathbf{\Xi},\mathbf{\Phi}_u)\\
\approx& \frac{1}{T_U}\sum_{n=1}^{N}\tau_n\Bigg[ \log_2 \det\Big(\mathbf{I}_L+\frac{N_B\mathbf{Q}_B[n]\mathbf{P}_u[n]\mathbf{Q}_B[n]}{\delta^2w_{B,u}[n]}\Big)\\
& \ \   +  N_B\Big[\log_2(w_{B,u}[n])-\log_2e\big(1-\frac{1}{w_{B,u}[n]}\big)\Big]\Bigg]\\
\triangleq& y\big(\mathbf{\Xi},\mathbf{\Phi}_u,\mathbf{w}_{B,u}\big),
\end{aligned}
\end{equation}
where $\mathbf{w}_{B,u}=\{w_{B,u}[n]\geq 1, \forall n\}$, and $w_{B,u}[n], \forall n$ can be uniquely determined by the following fixed-point equation
\begin{equation}
\label{fixed_point_equ}
\begin{aligned}
\small
w_{B,u}[n]
=1+\sum_{l=1}^{L}\frac{p_l^{u}[n]Q_{B,l}^{-1}[n]}{\delta^2+N_Bp_l^{u}[n]\big(Q_{B,l}[n]w_{B,u}[n]\big)^{-1}}, \forall n.
\end{aligned}
\end{equation} 

Note that $y\big(\mathbf{\Xi},\mathbf{\Phi}_u,\mathbf{w}_{B,u}\big)$ is a quite accurate form of $f_{B,u}(\mathbf{\Xi},\mathbf{\Phi}_u)$ \cite{Feng-2013}. However, $y\big(\mathbf{\Xi},\mathbf{\Phi}_u,\mathbf{w}_{B,u}\big)$ is still intractable since $w_{B,u}[n], \forall n$ involves the variable $p_l^u[n], \forall n$. By introducing $\mathbf{t}_{B,u}=\{t_{B,u}[n], \forall n\}$ and specifying $w_{B,u}[n]=e^{t_{B,u}[n]}, \forall n$, we have 
\begin{equation} 
\small
\label{approx_eq1}
\begin{aligned}
&\tilde{y}\big(\mathbf{\Xi},\mathbf{\Phi}_u,\mathbf{t}_{B,u}\big)\\
=&\frac{1}{T_U}\sum_{n=1}^{N}\tau_n\Bigg[ \log_2 \det\Big(\mathbf{I}_L+\frac{N_B\mathbf{Q}_B[n]\mathbf{P}_u[n]\mathbf{Q}_B[n]}{\delta^2e^{t_{B,u}[n]}}\Big)\\
& \ \ \ \ \ \ \ \ \ \  + N_B\log_2e\Big(t_{B,u}[n]-1+e^{-t_{B,u}[n]}\Big)\Bigg]\\
\triangleq&\frac{1}{T_U}\sum_{n=1}^{N}\tau_n g\Big(\mathbf{P}_u[n],N_B,\mathbf{Q}_B[n],t_{B,u}[n]\Big),
\end{aligned}
\end{equation}
where $\mathbf{t}_{B,u}\geq \mathbf{0}$ and $e^{t_{B,u}[n]}, \forall n$ needs to satisfy \eqref{fixed_point_equ}. 

Based on \eqref{approx_eq1}, we first consider the partial derivation of $\tilde{y}\big(\mathbf{\Xi},\mathbf{\Phi}_u,\mathbf{t}_{B,u}\big)$ with respect to $t_{B,u}[n]$, which can be expressed as
\begin{equation}
\allowdisplaybreaks
\label{first_derivation}
\small
\begin{aligned}
\allowdisplaybreaks
&\frac{\partial\tilde{y}\big(\mathbf{\mathbf{\Xi},\Phi}_u,\mathbf{t}_{B,u}\big)}{\partial t_{B,u}[n]}\\
=&-\frac{\tau_nN_B}{T_U\ln2}\Bigg[\sum_{l=1}^{L}\frac{p_l^u[n]\big(Q_{B,l}[n]e^{t_{B,u}[n]}\big)^{-1}}{\delta^2+N_Bp_l^u[n]\big(Q_{B,l}[n]e^{t_{B,u}[n]}\big)^{-1}}\\
& \ \ \ \ \ \ \ \ \ \ \ \ \ \ \ \ \ \ \ \ \ \ \ \ \ \ \ \ \  -1+e^{-t_{B,u}[n]}\Bigg], \forall n.
\end{aligned} 
\end{equation}

Let $s(t_{B,u}[n])=\sum_{l=1}^{L}\frac{p_l^u[n]\big(Q_{B,l}[n]e^{t_{B,u}[n]}\big)^{-1}}{\delta^2+N_Bp_l^u[n]\big(Q_{B,l}[n]e^{t_{B,u}[n]}\big)^{-1}}-1+e^{-t_{B,u}[n]}$. Thus, we have
\begin{equation}
\small
\label{monotonically_proof}
\begin{aligned}
\frac{\partial\tilde{y}\big(\mathbf{\Xi},\mathbf{\Phi}_u,\mathbf{t}_{B,u}\big)}{\partial t_{B,u}[n]}
\begin{cases}
<0, & s(t_{B,u}[n])>0,\\
=0, & s(t_{B,u}[n])=0,\\
>0, & s(t_{B,u}[n])<0. 
\end{cases}
\end{aligned}
\end{equation} 
Denote $\mathbf{s}(\mathbf{t}_{B,u})=\big[s(t_{B,u}[1]),\cdots,s(t_{B,u}[N])\big]^T$.
It can be observed from \eqref{monotonically_proof} that
$\tilde{y}\big(\mathbf{\Xi},\mathbf{\Phi}_u,\mathbf{t}_{B,u}\big)$ is monotonically decreasing when $\mathbf{s}(\mathbf{t}_{B,u})>\mathbf{0}$, $\tilde{y}\big(\mathbf{\Xi},\mathbf{\Phi}_u,\mathbf{t}_{B,u}\big)$ is monotonically increasing when $\mathbf{s}(\mathbf{t}_{B,u})<\mathbf{0}$, and  $\tilde{y}\big(\mathbf{\Xi},\mathbf{\Phi}_u,\mathbf{t}_{B,u}\big)$ achieves the extreme minimum point when $\mathbf{s}(\mathbf{t}_{B,u})=\mathbf{0}$. That is, when $\mathbf{t}_{B,u}$ increases from $\mathbf{0}$ to infinity, the function  $\tilde{y}\big(\mathbf{\mathbf{\Xi},\Phi}_u,\mathbf{t}_{B,u}\big)$ would first decrease and then increase.

In addition, it can be derived from \eqref{first_derivation} that 
\begin{equation}
\allowdisplaybreaks
\small
\begin{aligned}
\allowdisplaybreaks
&\frac{\partial^2\tilde{y}\big(\mathbf{\Xi},\mathbf{\Phi}_u,\mathbf{t}_{B,u}\big)}{\partial t_{B,u}^2[n]}\\
=&\frac{\tau_nN_B}{T_U\ln2}\Bigg[\sum_{l=1}^{L}\frac{\delta^2p_l^u[n]Q_{B,l}^{-1}[n]e^{t_{B,u}[n]}}{\big(\delta^2e^{t_{B,u}[n]}+N_Bp_l^u[n]Q_{B,l}^{-1}[n]\big)^2}+e^{-t_{B,u}[n]}\Bigg]\\
>&0.
\end{aligned} 
\end{equation}

Therefore, we can obtain that
\begin{equation}
\small
\allowdisplaybreaks
\label{second_order_derivation}
\begin{aligned}
\allowdisplaybreaks
&\frac{\partial^2\tilde{y}\big(\mathbf{\Xi},\mathbf{\Phi}_u,\mathbf{t}_{B,u}\big)}{\partial \mathbf{t}_{B,u}^2}\\
=&\Bigg[\frac{\partial^2\tilde{y}\big(\mathbf{\Xi},\mathbf{\Phi}_u,\mathbf{t}_{B,u}\big)}{\partial t_{B,u}^2[1]},\cdots,\frac{\partial^2\tilde{y}\big(\mathbf{\Xi},\mathbf{\Phi}_u,\mathbf{t}_{B,u}\big)}{\partial t_{B,u}^2[N]}\Bigg]^T
\geq \mathbf{0},
\end{aligned}
\end{equation} 
which indicates that  $\tilde{y}\big(\mathbf{\Xi},\mathbf{\Phi}_u,\mathbf{t}_{B,u}\big)$ is convex in terms of $\mathbf{t}_{B,u}$.

With the above discussions from \eqref{first_derivation} to \eqref{second_order_derivation}, we can conclude that the function  $\tilde{y}\big(\mathbf{\Xi},\mathbf{\Phi}_u,\mathbf{t}_{B,u}\big)$ would achieve the minimum value if and only if $\mathbf{t}_{B,u}$ satisfies the following equation
\begin{equation}
\label{fixed_point_equ1}
\sum_{l=1}^{L}\frac{p_l^u[n]\big(Q_{B,l}[n]e^{t_{B,u}[n]}\big)^{-1}}{\delta^2+N_Bp_l^u[n]\big(Q_{B,l}[n]e^{t_{B,u}[n]}\big)^{-1}}-1+e^{-t_{B,u}[n]}=0, \forall n.
\end{equation}

Comparing \eqref{fixed_point_equ} and \eqref{fixed_point_equ1}, it can be observed that when $\mathbf{t}_{B,u}$ satisfys the fixed-point equation, the function $y\big(\mathbf{\Xi},\mathbf{\Phi}_u,\mathbf{w}_{B,u}\big)$ would be minimized. Therefore, we have
\begin{equation}
\label{max_min_equ}
y\big(\mathbf{\mathbf{\Xi},\Phi}_u,\mathbf{w}_{B,u}\big)=\min_{\mathbf{t}_{B,u}\geq \mathbf{0}}\tilde{y}\big(\mathbf{\Xi},\mathbf{\Phi}_u,\mathbf{t}_{B,u}\big).
\end{equation}

Finally, we generalize the aforementioned derivation to the other three terms in \eqref{eqn}, and then complete the proof.
\vspace{-0.3cm}

\ifCLASSOPTIONcaptionsoff
  \newpage
\fi


\begin{thebibliography}{1}

\bibitem{Zeng-2016Magazine}
Y. Zeng, R. Zhang and T. J. Lim, ``Wireless communications
with unmanned aerial vehicles: Opportunities and challenges,'' IEEE Commun. Mag., vol. 54, no. 5, pp. 36-42, May 2016.

\bibitem{Koulali-2016Magaziney}
S. Koulali, E. Sabir, T. Taleb and M. Azizi, ``A green strategic activity
scheduling for UAV networks: A sub-modular game perspective,'' IEEE Commun. Mag., vol. 54, no. 5, pp. 58-64, May 2016.

\bibitem{Zeng-2016TC}
Y. Zeng, R. Zhang and T. J. Lim, ``Throughput maximization for UAV-enabled mobile relaying systems,'' IEEE Trans. Commun., vol. 64, no. 12, pp. 4983-4996, Dec. 2016.

\bibitem{Xuan-JCIN2018}
X. Wang, W. Feng, Y. Chen and N. Ge, ``Coverage optimization for UAV-aided Internet of Things with partial channel knowledge,'' J. Commun. Inf. Netw., vol. 3, no. 4, pp. 55-63, Dec. 2018.

\bibitem{Wang-2016VTM}
J. Wang, C. Jiang, Z. Han, Y. Ren, R. G. Maunder and L. Hanzo, ``Cooperative distributed unmanned aerial vehicular networks: Small and
mini drones,'' IEEE Veh. Technol. Mag., pp. 1-18, Dec. 2016.

\bibitem{Yin-2018IoTJournal}
C. Yin, Z. Xiao, X. Cao, X. Xi, P. Yang and D. Wu, ``Offline and online search: UAV multiobjective path planning under dynamic urban environment,'' IEEE Internet of Things J., vol. 5, no. 2, pp. 546-558, Apr. 2018.

\bibitem{Zeng-TWC2017}
Y. Zeng and R. Zhang, ``Energy-efficient UAV communication with
trajectory optimization,'' IEEE Trans. Wireless Commun., vol. 16, no. 6, pp. 3747-3760, Jun. 2017.

\bibitem{Chen-ICL2018}
Y. Chen, W. Feng and G. Zheng, ``Optimum placement of UAV as relays,'' IEEE Commun. Lett., vol. 22, no. 2, pp. 248-251, Feb. 2018.


\bibitem{Xiang-arXiv2018}
X. Li, W. Feng, Y. Chen, C.-X. Wang and N. Ge, ``Maritime coverage enhancement using UAVs coordinated with hybrid satellite-terrestrial networks,'' arXiv:1904.02602, Apr. 2019.

\bibitem{Qi-ChinaCom2018}
T. Qi, W. Feng and Y. Wang, ``Outage performance of non-orthogonal multiple access based unmanned aerial vehicles satellite networks,'' China Commun., vol. 15, no. 5, pp. 1-8, May 2018.

\bibitem{Yuan-2018CM}
Z. Yuan, J. Jin, L. Sun, K. Chin and G. Muntean, ``Ultra-reliable IoT communications with UAVs: A swarm use case,'' IEEE Commun. Mag., vol. 56, no. 12, pp. 90-96, Dec. 2018.

\bibitem{Li-IoT2018}
B. Li, Z. Fei and Y. Zhang, ``UAV communications for 5G and beyond: Recent advances and future trends,'' IEEE Internet of Things J., to be published.

\bibitem{3GPP-22261}
3GPP Technical Specifcation 22.261, ``Service requirements for next generation new services and markets,'' v15.2.0, 22 Sept. 2017.

\bibitem{3GPP-22862}
3GPP Technical Report 22.862, ``Feasibility study on new services and markets technology enablers for critical communications; Stage 1,'' v14.1.0, 4 Oct. 2016.

\bibitem{3GPP-36777}
3GPP Technical Report 36.777, ``Enhanced LTE support for aerial vehicles,'' v0.3.0, 29 Sept. 2017.

\bibitem{Strohmeier-2015CST}
M. Strohmeier, V. Lenders and I. Martinovic, ``On the security of the automatic dependent surveillance-broadcast protocol,'' IEEE Commun. Surveys Tuts., vol. 17, no. 2, pp. 1066-1087, Secondquarter 2015.

\bibitem{Wang-2017WCL}
Q. Wang, Z. Chen, W. Mei and J. Fang, ``Improving physical layer security using UAV-enabled mobile relaying,'' IEEE Wireless Commun. Lett., vol. 6, no. 3, pp. 310-313, Jun. 2017.


\bibitem{Lee-2018TVT}
H. Lee, S. Eom, J. Park and I. Lee, ``UAV-aided secure communications with cooperative jamming,'' IEEE Trans. Veh. Technol., vol. 67, no. 10, pp. 9385-9392, Oct. 2018.

\bibitem{Cui-TVT2018}
M. Cui, G. Zhang, Q. Wu and D. W. K. Ng, ``Robust trajectory and transmit power design for secure UAV communications,'' IEEE Trans. Veh. Technol., vol. 67, no. 9, pp. 9042-9046, Sept. 2018.

\bibitem{Zhou-2018 TVT}
Y. Zhou, P. L. Yeoh, H. Chen, Y. Li, R. Schober, L. Zhuo and B. Vucetic, ``Improving physical layer security via a UAV friendly jammer for unknown eavesdropper location,'' IEEE Trans. Veh. Technol., vol. 67, no. 11, pp. 11280-11284, Nov. 2018.

\bibitem{Zhou-2010TVT}
X. Zhou and M. R. McKay, ``Secure transmission with artificial noise over fading channels: Achievable rate and optimal power allocation,'' IEEE Trans. Veh. Technol., vol. 59, no. 8, pp. 3831-3842, Oct. 2010.

\bibitem{Zhang-2013TVT}
X. Zhang, X. Zhou and M. R. McKay, ``On the design of artificial noise aided secure multi-antenna transmission in slow fading channels,'' IEEE
Trans. Veh. Technol., vol. 62, no. 5, pp. 2170-2181, Jun. 2013.

\bibitem{Lin-2013JASC}
P.-H. Lin, S.-H. Lai, S.-C. Lin and H.-J. Su, ``On secrecy rate of the generalized artificial-noise assisted secure beamforming for wiretap channels,'' IEEE J. Sel. Areas Commun., vol. 31, no. 9, pp. 1728-1740, Sept. 2013.

\bibitem{Li-2013TSP}
Q. Li and W.-K. Ma, ``Spatially selective artificial-noise aided transmit
optimization for MISO multi-Eves secrecy rate maximization,'' IEEE Trans. Signal Process., vol. 61, no. 10, pp. 2704-2717, May 2013.

\bibitem{Zhu-TWC2016}
J. Zhu, R. Schober and V. K. Bhargava, ``Linear precoding of data and artificial noise in secure massive MIMO systems,'' IEEE Trans. Wireless Commun., vol. 15, no. 3, pp. 2245-2261, Mar. 2016.

\bibitem{Zhou-2019TVT}
X. Zhou, Q. Wu, S. Yan, F. Shu and J. Li, ``UAV-enabled secure communications: Joint trajectory and transmit power optimization,'' IEEE Trans. Veh. Technol., vol. 68, no. 4, pp. 4069-4073, Apr. 2019.

\bibitem{Liu-2017WCSP}
C. Liu, T. Q. S. Quek and J. Lee, ``Secure UAV communication in the presence of active eavesdropper (invited paper),'' in Proc. IEEE Int. Conf. Wireless Commun. Signal Process., Nanjing, China, Jan. 2017, pp. 1-6.

\bibitem{Khisti-2010TIT-I}
A. Khisti and G. W. Wornell, ``Secure transmission with multiple antennas I: The MISOME wiretap channel,'' IEEE Trans. Inf. Theory, vol. 56, no. 7, pp. 3088-3104, Jul. 2010.

\bibitem{Shi-2014TWC}
Q. Shi, L. Liu, W. Xu and R. Zhang, ``Joint transmit beamforming and receive power
splitting for MISO SWIPT systems,'' IEEE Trans. Wireless Commun., vol. 13, no. 6, pp. 3269-3280, Jun. 2014.

\bibitem{Khisti-2010TIT-II}
A. Khisti and G. W. Wornell, ``Secure transmission with multiple antennas-part II: The MIMOME wiretap channel,'' IEEE Trans. Inf. Theory, vol. 56, no. 11, pp. 5515-5532, Nov. 2010.

\bibitem{Li-2013JASC}
Q. Li, M. Hong, H. Wai, Y. Liu, W. Ma and Z. Luo, ``Transmit solutions for MIMO wiretap channels using alternating optimization,'' IEEE J. Sel. Areas Commun., vol. 31, no. 9, pp. 1714-1727, Sept. 2013.

\bibitem{Parada-ISIT2005}
P. Parada and R. Blahut, ``Secrecy capacity of SIMO and slow fading channels,'' in  Proc. IEEE Int. Symp. Inf. Theory, Adelaide, SA, Sept. 2005, pp. 2152-2155.

\bibitem{Moslen-2016TWC}
S. Mosleh, L. Li and J. Zhang, ``Proportional-fair resource allocation for coordinated multi-point transmission in LTE-advanced,'' IEEE Trans. Wireless Commun., vol. 15, no. 8, pp. 5355-5367, Aug. 2016.

\bibitem{Du-2014ICL}
B. Du, C. Pan, W. Zhang and M. Chen, ``Distributed energy-efficient power optimization for CoMP systems with max-min fairness,'' IEEE Commun. Lett., vol. 18, no. 6, pp. 999-1002, Jun. 2014.

\bibitem{Feng-2019IoTJournal}			
W. Feng, J. Wang, Y. Chen, X. Wang, N. Ge and J. Lu, ``UAV-aided MIMO communications for 5G Internet of Things,'' IEEE Internet of Things J., vol. 6, no. 2, pp. 1731-1740, Apr. 2019.

\bibitem{CSNC2016-Wang}
L. Wang, X. Wen, C. Huang, ``Application of low cost GPS/BDS receiver in  UAV precise navigation and positioning,'' In Proc. China Sat. Navigation Conf., Changsha, China, May 2016, pp. 165-174.


\bibitem{Al-Hourani-2014WCL}
A. Al-Hourani, S. Kandeepan and S. Lardner, ``Optimal LAP altitude for maximum coverage,'' IEEE Wireless Commun. Lett., vol. 3, no. 6, pp. 569-572, Dec. 2014.


\bibitem{2008Goel-TWC}
S. Goel and R. Negi, ``Guaranteeing secrecy using artificial noise,'' IEEE Trans. Wireless Commun., vol. 7, no. 6, pp. 2180-2189, Jun. 2008.

\bibitem{2016TVT-Na}
N. Li, X. Tao, H. Wu, J. Xu and Q. Cui, ``Large-system analysis of artificial-noise-assisted communication in the multiuser downlink: Ergodic secrecy sum rate and optimal power allocation,'' IEEE Trans. Veh. Technol., vol. 65, no. 9, pp. 7036-7050, Sept. 2016.

\bibitem{ISPRS2015}
B. Yang and C. Chen, ``Automatic registration of UAV-borne sequent images and LiDAR data,'' ISPRS J. Photogram. Remote Sens., vol. 101,
pp. 262–274, Mar. 2015.

\bibitem{Chu-TGC2018}
Z. Chu, H. X. Nguyen, T. A. Le, M. Karamanoglu, E. Ever and A. Yazici, ``Secure wireless powered and cooperative jamming D2D communications,'' IEEE Trans. Green Commun. Netw., vol. 2, no. 1, pp. 1-13, Mar. 2018.


\bibitem{Hanlen-2012TIT}
L. Hanlen and A. Grant, ``Capacity analysis of correlated MIMO channels,'' IEEE Trans. Inf. Theory, vol. 58, no. 11, pp. 6773-6787, Nov. 2012.

\bibitem{Feng-2013}
W. Feng, Y. Wang, N. Ge, J. Lu and J. Zhang, ``Virtual MIMO in multi-cell distributed antenna systems: Coordinated transmissions with large-scale CSIT,'' IEEE J. Sel. Areas Commun., vol. 31, no. 10, pp. 2067-2081, Oct. 2013.

\bibitem{Brayton-1979}
R. Brayton, S. Director, G. Hachtel and L. Vidigal, ``A new algorithm for statistical circuit design based on quasi-Newton methods and function splitting,'' IEEE Trans. Circuits Syst., vol. 26, no. 9, pp. 784-794, Sept. 1979.

\bibitem{Boyd-2004}
S. Boyd and L. Vandenberghe, Convex Optimization, Cambridge University Press, 2004.

\bibitem{Wu-TWC2018}
Q. Wu, Y. Zeng and R. Zhang, ``Joint trajectory and communication design for multi-UAV enabled wireless networks,'' IEEE Trans. Wireless Commun., vol. 17, no. 3, pp. 2109-2121, Mar. 2018.

\bibitem{TSG2019-ZhenLi}
S. Li, L. Li, Z. Li, X. Chen, T. Fernando, et al., ``Event-trigger heterogeneous nonlinear filter for wide-area measurement systems in power grid,'' IEEE Trans. Smart Grid, vol. 10, no. 3, pp. 2752-2764, May 2019.

\bibitem{TFS2019-ZhenLi}
Y. Yu, Z. Li, X. Liu, K. Hirota, X. Chen, T. Fernando, et al., ``A nested tensor product model transformation,'' IEEE Trans. Fuzzy Syst., vol. 27, no. 1, pp. 1-15, Jan. 2019.

\bibitem{TSG2018-ZhenLi}
X. Liu, L. Li, Z. Li, X. Chen, T. Fernando, et al., ``Event-trigger particle filter for smart grids with limited communication bandwidth infrastructure,'' IEEE Trans. Smart Grid, vol. 9, no. 6, pp. 6918-6928, Nov. 2018.

\bibitem{TCS2018-ZhenLi}
B. Liu, Z. Li, X. Chen, Y. Huang and X. Liu, ``Recognition and vulnerability analysis of key nodes in power grid based on complex network centrality,'' IEEE Trans. Circuits Syst. II, Exp. Briefs, vol. 65, no. 3, pp. 346-350, Mar. 2018.

\bibitem{Feng-WOCC}
W. Feng, N. Ge and J. Lu, ``Coordinated satellite-terrestrial networks: A robust spectrum sharing perspective,'' in Proc. Wireless Opt. Commun. Conf., Newark, New Jersey, USA, May 2017, pp. 1-5. 

\end{thebibliography}
\end{document}